\documentclass[11pt]{article}
\bibliographystyle{plainurl}

\title{Incremental DFS algorithms: a theoretical and experimental study}

\author{Surender Baswana\thanks{Department of Computer Science and Engineering, Indian Institute of Technology Kanpur, India (www.cse.iitk.ac.in),
email: \{sbaswana,shahbazk\}@cse.iitk.ac.in, ayushgoel529@gmail.com. 
}
\thanks{This research was partially supported 
by {\em UGC-ISF} (the University Grants Commission of India \& Israel Science Foundation) and
{\em IMPECS} (the Indo-German Max Planck Center for Computer Science) .}
\and Ayush Goel \footnotemark[1]
\and Shahbaz Khan \footnotemark[1] \thanks{This research was partially supported 
by Google India under the Google India PhD Fellowship Award.}
}

\usepackage{amssymb,amsthm}
\usepackage{amsmath}
\setcounter{tocdepth}{3}
\usepackage{graphicx,array}
\usepackage{hyperref}





\newtheorem{theorem}{Theorem}[section]

\newtheorem{lemma}{Lemma}[section]

\newtheorem{corollary}{Corollary}[section]

\setlength{\textwidth}{6.5in}
\setlength{\oddsidemargin}{0.0in}
\setlength{\textheight}{9.2in}
\addtolength{\topmargin}{-.875in}


\begin{document}

\date{}
\maketitle

\begin{abstract}
Depth First Search (DFS) tree is a fundamental data structure used for solving various graph problems.
For a given graph $G=(V,E)$ on $n$ vertices and $m$ edges, a DFS tree can be built in $O(m+n)$ time.
In the last 20 years, a few algorithms have been designed for maintaining a 
DFS tree efficiently under insertion of edges. 
For undirected graphs, there are two prominent algorithms, namely, ADFS1 and ADFS2~[ICALP14] 
that achieve total update time of $O(n^{3/2}\sqrt{m})$ and  $O(n^2)$ respectively. 
For directed acyclic graphs, the only non-trivial algorithm, namely, FDFS~[IPL97] 
requires total 
$O(mn)$ update time. However, even after 20 years of this result, there does not exist any 
nontrivial incremental algorithm for a DFS tree in directed graphs with $o(m^2)$ worst case bound. 

In this paper, we carry out extensive experimental and theoretical 
evaluation of the existing algorithms for maintaining incremental DFS in random graphs 
and real world graphs and derive the following interesting results.
\begin{enumerate}
\item
For insertion of uniformly random sequence of $n \choose 2$ edges, each of ADFS1, ADFS2 and FDFS 
perform equally well and are found to take $\Theta(n^2)$ time experimentally. 
This is quite surprising because the worst case bounds of ADFS1 and FDFS
are greater than $\Theta(n^2)$ by a factor of $\sqrt{m/n}$ and $m/n$ respectively,
which we prove to be tight.
We complement this experimental result by probabilistic analysis of these algorithms 
proving $\tilde{O}(n^2)$ \footnote{$\tilde{O}()$ hides the poly-logarithmic factors.} bound on the time complexity.
%
For this purpose, we derive results about the structure of a DFS tree in a random graph. 
These results are of independent interest in the  domain of random graphs.

\item
The insight that we developed about DFS tree in random graphs leads us to design
an extremely simple algorithm for incremental DFS that works for both undirected and directed graphs.
Moreover, this algorithm theoretically matches and experimentally outperforms the performance of 
the state-of-the-art algorithm in dense random graphs.
Furthermore, it can also be used as a single-pass semi-streaming algorithm for 
computing incremental DFS
and strong connectivity for random graphs using $O(n\log n)$ space.

%
\item
Even for real world graphs, which are usually sparse, both ADFS1 and FDFS turn out to be much better than
their theoretical bounds. Here again, we present two simple algorithms for incremental DFS for 
directed and undirected graphs respectively, which perform very well on real graphs.
In fact our proposed algorithm for directed graphs almost always matches the performance of FDFS. 

%
\end{enumerate}

\end{abstract}
\section{Introduction}
Depth First Search (DFS) is a well known graph traversal technique. Right from the seminal work of Tarjan~\cite{Tarjan72}, DFS traversal 
has played the central role in the design of efficient algorithms for many fundamental graph problems, namely, biconnected components, 
strongly connected components, topological sorting~\cite{Tarjan72},
bipartite matching~\cite{HopcroftK73}, dominators~\cite{Tarjan74} and planarity testing~\cite{HopcroftT74}.

A DFS traversal produces a rooted spanning tree (or forest), called DFS tree (forest). 
Let $G=(V,E)$ be a graph on $n=|V|$ vertices and $m=|E|$ edges. 
It takes $O(m+n)$ time to perform a DFS traversal and generate its DFS tree (forest). 
Given any ordered rooted spanning tree, the non-tree edges of the graph can be 
classified into four categories as follows. 
An edge directed from a vertex to its ancestor in the tree is called a {\it back edge}. 
Similarly, an edge directed from a vertex to its descendant in the tree is called a {\it forward edge}.
Further, an edge directed from right to left in the tree is called a {\it cross edge}.
The remaining edges directed from left to right in the tree are called {\it anti-cross edges}.
A necessary and sufficient condition for such a tree to be a DFS tree is the absence of anti-cross edges. 
In case of undirected graphs, this condition reduces to absence of all cross edges.


Most of the graph applications in the real world deal with graphs that keep 
changing with time. These changes can be in the form of insertion or deletion of edges. 
An algorithmic graph problem is modeled in
the dynamic environment as follows. There is an online 
sequence of insertion and deletion of edges and the aim is to maintain the
solution of the given problem after every edge update. To achieve this aim,
we need to maintain some clever data structure for the problem such that the time taken to   
update the solution after an edge update is much smaller than 
that of the best static algorithm. A dynamic algorithm is called an 
incremental algorithm if it supports only insertion of edges.
In spite of the fundamental nature of the DFS tree,  very few incremental algorithms have been designed for  a 
DFS tree. A short summary of the current-state-of-the-art of incremental algorithms for DFS tree is as follows.
An obvious incremental algorithm is to recompute the whole DFS tree in $O(m+n)$ time from scratch after every edge insertion. Let us
call it SDFS henceforth. It was shown by Kapidakis~\cite{Kapidakis90} that a DFS tree can be computed in $O(n\log n)$ time
on random graph~\cite{ErdosR60,Bollobas84} if we terminate the traversal as soon as all vertices are visited. 
Let us call this variant as SDFS-Int.
Notice that both these algorithms recompute the DFS tree from scratch after every edge insertion. Let us now
move onto the algorithms that avoid this recomputation from scratch.

The first incremental DFS algorithm, namely FDFS, 
was given by Franciosa et al.~\cite{FranciosaGN97} for directed acyclic graphs,
requiring total update time of $O(mn)$. 
For undirected graphs, Baswana and Khan~\cite{BaswanaK14} presented two algorithms, 
namely ADFS1 and ADFS2,  that achieve total update time of $O(n^{3/2}\sqrt{m})$ and 
$O(n^2)$ respectively. However, the worst case update time to process an edge for these 
algorithms is still $O(m)$. 
Recently, an incremental algorithm~\cite{BaswanaCCK16}, namely WDFS, giving a worst case guarantee of 
$O(n\log^3 n)$ on the update time was designed. 
However, till date there is no non-trivial incremental algorithm for DFS tree in 
general directed graphs. Refer to Table~\ref{tab:exist_algo} for a comparison of these results.

Despite having several algorithms for incremental DFS, not much is known about their empirical performance. 
For various algorithms~\cite{Meyer01a,AlimontiLM96,BastMST06}, 
the average-case time complexity (average performance on random graphs) have been proven to 
be much less than their worst case complexity. 
A classical example is the algorithm by Micali and Vazirani~\cite{MicaliV80} for maximum matching. 
Its average case complexity have been proved to be only $O(m\log n)$~\cite{Motwani94,BastMST06},
despite having a worst case complexity of $O(m\sqrt{n})$. 
An equally important aspect is the empirical performance of an algorithm on real world graphs.
After all, the ideal goal is to design an algorithm having a theoretical guarantee of efficiency
in the worst case as well as superior performance on real graphs. 
Often such an experimental analysis also leads to the design of simpler algorithms 
that are extremely efficient in real world applications. 
Thus, such an analysis bridges the gap between theory and practice. 
The experimental analysis of different algorithms for several dynamic graph problems have been performed
including connectivity~\cite{AlbertsCI97,IyerKRT01}, minimum spanning trees~\cite{RibeiroT07,CattaneoFPI10}, 
shortest paths~\cite{DemetrescuI06,GorkeMSSW13,SchultesS07}, etc. 

\begin{table}[!ht]
\centering
\begin{tabular}{|m{1.8cm}|l|m{3.5cm}| m{3.5cm} |}
  	\hline
    \textbf{Algorithm} & \textbf{Graph} & \textbf{Time per update} & \textbf{Total time}\\
    \hline
	SDFS~\cite{Tarjan72} & Any & $O(m)$ & $O(m^2)$ \\ \hline
	SDFS-Int~\cite{Kapidakis90} & Random & $O(n\log n)$  expected & $O(mn\log n)$ expected \\ \hline
	FDFS~\cite{FranciosaGN97} & DAG & $O(n)$  amortized & $O(mn)$ \\ \hline
	ADFS1~\cite{BaswanaK14} & Undirected & $O(n^{3/2}/\sqrt{m})$ amortized & $O(n^{3/2}\sqrt{m})$ \\ \hline
	ADFS2~\cite{BaswanaK14} & Undirected & $O(n^2/m)$ amortized & $O(n^2)$ \\ \hline
	WDFS~\cite{BaswanaCCK16} & Undirected & $O(n\log^3 n)$ & $O(mn\log^3 n)$ \\	\hline
  \end{tabular}
\caption{Comparison of different algorithms for maintaining incremental DFS of a graph.}
\label{tab:exist_algo}
\end{table}

Our study focuses on only incremental DFS algorithms as most dynamic graphs in real world are 
dominated by insertion updates~\cite{Konect13,snapDATA,bigDND14}. 
Moreover, in every other dynamic setting, only a single dynamic DFS algorithm is known~\cite{BaswanaCCK16,BaswanaC15}, 
making a comparative study impractical.


\subsection{Our result}
In this paper, we contribute to both experimental analysis and average-case analysis of the algorithms for incremental DFS.
Our analysis reveals the following interesting results.
%
\begin{enumerate}
\item \textbf{Experimental performance of the existing algorithms}\\
We first evaluated the performance of the existing algorithms on the insertion of a 
uniformly random sequence of $n \choose 2$ edges.
The most surprising revelation of this evaluation was the similar performance of ADFS1 and ADFS2, 
despite the difference in their worst case bounds (see Table~\ref{tab:exist_algo}). 
Further, even FDFS performed better on random graphs taking just $\Theta(n^2)$ time.
This is quite surprising because the worst case bounds of ADFS1 and FDFS
are greater than $\Theta(n^2)$ by a factor of $\sqrt{m/n}$ and $m/n$ respectively. 
We then show the tightness of their 
analysis~\cite{BaswanaK14,FranciosaGN97} 
by constructing worst case examples for them 
(see Appendix~\ref{appn:wc_ADFS} and~\ref{appn:wc_FDFS}).
Their superior performance on random graphs motivated us to explore 
the structure of a DFS tree in a random graph.
%
\item \textbf{Structure of DFS tree in random graphs}\\
A DFS tree of a random graph can be seen as a broomstick:  a possibly long path without any branching (stick) 
followed by a bushy structure (bristles).
As the graph becomes denser, we show that the length of the stick would increase significantly
and establish the following result.
\begin{theorem}
For a random graph $G(n,m)$ with $m=2^i n\log n$, its DFS tree will have a stick of length at least $n-n/2^i$ 
with probability $1-O(1/n)$.
\label{thm:bristle-sizeM}
\end{theorem}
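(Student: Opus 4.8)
The plan is to analyze the DFS traversal itself as a random exploration process, using the principle of deferred decisions. I would model $G(n,m)$ through its (approximate) edge density $p=\binom{n}{2}^{-1}m\approx 2^{i+1}\log n/n$, so that when the traversal sits at the current endpoint of its descent with $u$ vertices still unexplored and asks whether it can go one step deeper, each of the $u$ unexplored vertices is a neighbour essentially independently with probability $p$. The stick is exactly the simple path the DFS pushes onto its stack before the first time it retreats (reaches a vertex with no unexplored neighbour), and the bristles are everything discovered after this first retreat. Writing $u^\ast$ for the number of vertices still unexplored when the descent first dead-ends, the whole theorem reduces to the tail bound $u^\ast\le n/2^i$, since then the stick already contains $n-u^\ast\ge n-n/2^i$ vertices.

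First I would prove this tail bound. At a descent step with $u$ unexplored vertices the current vertex dead-ends with probability $(1-p)^{u}\le e^{-pu}$. Because $p\cdot(n/2^i)\approx 2\log n$, this is at most $n^{-2}$ as soon as $u\ge n/2^i$, so $\Pr[u^\ast>n/2^i]\le\sum_{u>n/2^i}(1-p)^{u}=O\!\big((1-p)^{n/2^i}/p\big)=O(1/n)$. Hence with probability $1-O(1/n)$ the descent stays alive until fewer than $n/2^i$ vertices remain, i.e.\ it threads a simple path of length at least $n-n/2^i$ before ever backtracking.

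The second, more delicate step is to upgrade ``long initial descent'' to ``long stick'', i.e.\ to guarantee that none of these first $n-n/2^i$ vertices later acquires a second child once the DFS starts backtracking. For this the at most $n/2^i$ remaining vertices (the bristles) must all be absorbed into the deep end of the tree, never re-attaching above depth $n-n/2^i$. The plan is to show that the graph induced on this residual set of $\le n/2^i$ vertices contains no connected piece separated from the deep portion of the descent: if every residual vertex is pulled in through the deep part, the stack never pops back above depth $n-n/2^i$ while unexplored vertices remain, so no early stick vertex can branch. This is a no-small-isolated-component estimate for a random graph on $\approx n/2^i$ vertices with edge probability $p$, whose expected degree $\approx 2\log n$ places it a constant factor above the connectivity threshold; a union bound over component sizes bounds the failure probability by $\sum_{s\ge 1}(n/2^i)^{s}(1-p)^{s(n/2^i-s)}=O(1/n)$.

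The main obstacle is the dependency created by the exploration: the edges inspected while building the descent are no longer fresh, so I must argue with the residual randomness only. Here the edges among the unexplored vertices are never queried and stay fresh, while the queried stick-to-residual edges were all found \emph{absent}, and I must check that this conditioning does not inflate the isolation probability. Keeping every estimate at the $1-O(1/n)$ level is where the analysis is tightest, the worst case being when the first dead-end occurs relatively early so that the deep portion anchoring the bristles is short. Finally I would transfer the whole argument from the independent-edge model $G(n,p)$ with $p=m/\binom{n}{2}$ to the fixed-size model $G(n,m)$ via the standard coupling between the two, which perturbs each probability only by lower-order factors.
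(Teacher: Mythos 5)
Your first step (the descent analysis) is sound and is essentially the paper's Lemma~\ref{lemma:dfs-path}: both arguments expose, at each descent step with $u$ unexplored vertices, only fresh edges incident to the newly reached vertex and bound the dead-end probability by $(1-p)^u$; your geometric-sum union bound is, if anything, cleaner than the paper's dyadic-block product. The genuine gap is in your second step. The event you need is that every component of the residual graph $G[R]$ has an edge into the deep anchor $D=\{x_j:\ n-n/2^i\le j<d\}$, and your union bound $\sum_s \binom{|R|}{s}(1-p)^{s(n/2^i-s)}$ (read charitably, with the exponent counting pairs from a candidate component $S$ to $(R\setminus S)\cup D$) treats all of those pairs as fresh. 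They are not, and this is not a conditioning subtlety that can be ``checked away'': it goes in the wrong direction for you, and it is quantitatively large. By the time the descent dead-ends, every pair $(x_j,w)$ with $w\in R$ that the traversal examined was found to be a non-edge --- that is exactly why $w$ is still unexplored --- and near the bottom of the descent the traversal examines $\Theta(1/p)=\Theta\bigl(n/(2^i\log n)\bigr)$ unexplored vertices per step, which is comparable to $|R|$ itself. So, depending on the adjacency-scanning order of the DFS, essentially all of the $D\times R$ pairs can already be conditioned absent, leaving no fresh randomness for the attachment event; conditioning on non-edges can only increase the isolation probability you are bounding. A second, related problem is the short-anchor case you flag but do not resolve: your own tail bound cannot rule it out at the required probability level, since $\Pr[u^*>n/2^{i+1}]\approx 1/(2^{i+1}\log n)\gg 1/n$, so dead-ends that leave an anchor of sublinear-in-$n/2^i$ size cannot be absorbed into the $O(1/n)$ failure probability.

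The paper avoids both problems by never looking at the dead-end or at residual-to-stick edges at all. It fixes the threshold depth $n-n_0$ (here $n_0=n/2^i$) in advance, lets $R'$ be the unexplored set at the moment the descent reaches that depth (so $|R'|=n_0$ exactly), and requires two events: (i) the descent reaches depth $n-n_0$ \emph{and takes one more step}, thereby entering $R'$ (this is the Lemma~\ref{lemma:dfs-path} event, i.e.\ your step one); and (ii) the induced subgraph $G[R']$ is connected, which follows from the connectivity threshold (Theorems~\ref{theorem:Frieze} and~\ref{theorem:GnpGnm}) with $c\approx\log n$. Given (i) and (ii), all of $R'$ is absorbed into the single DFS subtree rooted at the entry vertex below $x_{n-n_0}$, so no vertex of depth at most $n-n_0$ ever acquires a second child, and no residual-to-stick attachment estimate is needed anywhere. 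Crucially, event (ii) concerns only edges with both endpoints inside $R'$, which the descent never queries, so its probability is untouched by the conditioning that breaks your argument. The repair of your proposal is precisely this reformulation: replace ``residual set at the dead-end plus deep anchor'' by ``residual set at a fixed depth plus internal connectivity.''
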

%
%
%
%
The length of stick evaluated from our experiments matches perfectly with the value given by Theorem~\ref{thm:bristle-sizeM}.
It follows from the broomstick structure that the insertion of only the edges with both endpoints in the bristles
can change the DFS tree. 
As follows from Theorem~\ref{thm:bristle-sizeM}, the size of bristles decreases 
as the graph becomes denser. 
With this insight at the core, we are able to establish $\tilde{O}(n^2)$ bound on ADFS1 and FDFS for a uniformly random 
sequence of $n \choose 2$ edge insertions.

\noindent
\textbf{Remark:} It was Sibeyn~\cite{Sibeyn01} who first suggested viewing a DFS tree as a broomstick while studying the 
height of a DFS tree in random graph. However, his definition of {\em stick} allowed a few branches on the stick as well. 
Note that our added restriction (absence of branches on the stick) is extremely crucial 
in deriving our results as is evident from the discussion above.

\item \textbf{New algorithms for random and real world graphs}\\
We use the insight about the broomstick structure and Theorem~\ref{thm:bristle-sizeM} to 
design a much simpler incremental DFS algorithm (referred as SDFS2) that works for both undirected graphs and directed graphs. 
Despite being very simple, it is shown to theoretically match (upto $\tilde{O}(1)$ factors) 
and experimentally outperform the performance of ADFS and FDFS for dense random graphs.

For real graphs both ADFS and FDFS were found to perform much better than other algorithms including SDFS2.
With the insights from ADFS/FDFS, we design two simple algorithms 
for undirected and directed graphs respectively (both referred as SDFS3), which perform much better than SDFS2. 
In fact, for directed graphs SDFS3 almost matches the performance of FDFS for most real graphs considered, 
despite being much simpler to implement as compared to FDFS. 


\item \textbf{Semi-Streaming Algorithms}\\
Interestingly, both SDFS2 and SDFS3 can also be used as single-pass semi-streaming algorithms 
for computing a DFS tree of a random graph using $O(n\log n)$ space.
This immediately also gives 
a single-pass semi-streaming algorithm using the same bounds for 
answering strong connectivity 
queries incrementally.
Strong connectivity is 
shown~\cite{BorradaileMM14,Janthong14} to require a working memory of $\Omega(\epsilon m)$ to answer these queries 
with probability greater than $(1+\epsilon)/2$ in general graphs.
Hence, our algorithms not only give a solution for the problem
in semi-streaming setting 
but also establish the difference in hardness of 
the problem 
in semi-streaming model for general and random graphs.
\end{enumerate}

\subsection{Organization of the article}
We now present the outline of our paper.
In Section~\ref{sec:prelim}, we describe the various notations used throughout the paper in addition to the 
experimental settings and the datasets used as input. Section~\ref{sec:exist_algos} gives a brief overview of the 
existing algorithms for maintaining incremental DFS.
The experimental evaluation of these algorithms on random undirected graphs is presented in Section~\ref{sec:basic_exp}.
In the light of inferences drawn from this evaluation, the experiments to understand the structure of the DFS tree for random graphs is 
presented in Section~\ref{sec:broomStick_exp}. Then, we theoretically establish the properties
of this structure and provide a tighter analysis of the aforementioned algorithms for random graphs in Section~\ref{sec:BroomStickImp}. 
The new algorithm for incremental DFS inspired by the broomstick structure of the DFS tree 
is presented and evaluated 
in Section~\ref{sec:new_algos}.
In Section~\ref{sec:real_exp}, we evaluate the existing algorithms on real graphs and proposes simpler algorithms that 
perform very well on real graphs.
Finally, Section~\ref{sec:conclusion} presents some concluding remarks 
and the scope for future work.
\section{Preliminary}
\label{sec:prelim}
For all the experiments described in this paper, we add a pseudo root to the graph, i.e.,
a dummy vertex $s$ that is connected to all vertices in the graph $G$.
All the algorithms thus start with an empty graph augmented with the pseudo root $s$ and its edges,
and maintain a DFS tree rooted at $s$ after every edge insertion. 
It can be easily observed that each subtree rooted at any 
child of $s$ is a DFS tree of a connected component of the graph.
Given the graph $G$ under insertion of edges, 
the following notations will be used throughout the paper.
%
%
%
\begin{itemize}
\item $T:$~ A DFS tree of $G$ at any time during the algorithm.
\item $path(x,y):$ Path from the vertex $x$ to the vertex $y$ in $T$.
\item $T(x):$ The subtree of $T$ rooted at a vertex $x$.
\item $LCA(u,v):$ The lowest common ancestor of $u$ and $v$ in $T$.
\end{itemize}
The two prominent models for studying random graphs are $G(n,m)$~\cite{Bollobas84} and $G(n,p)$~\cite{ErdosR59,ErdosR60}. 
A random graph $G(n,m)$ consists of the first $m$ edges of a uniformly random permutation of all possible 
edges in a graph with $n$ vertices. In a random graph $G(n,p)$, every edge is added to the graph with a probability of $p$
independent of other edges. 
Further, a graph property $\cal P$ is called a {\em monotone increasing} graph property if $G\in \cal P$ implies that
$G+e \in \cal P$, where $G+e$ represents the graph $G$ with an edge $e$ added to it.
We now state the following classical results for random graphs that shall be used in our analysis.
\begin{theorem}\cite{FriezeK15} 
	Graph $G(n,m)$ with $m=\frac{n}{2}(\log n + c)$  is connected with probability at least $e^{-e^{-c}}$ 
	for any constant $c$.
\label{theorem:Frieze}
\end{theorem}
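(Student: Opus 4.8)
The plan is to show that at this edge density the dominant---indeed essentially the only---obstruction to connectivity is the presence of \emph{isolated vertices}, and that the number of isolated vertices behaves like a Poisson random variable with mean $e^{-c}$. Since a $\mathrm{Poisson}(\lambda)$ variable equals $0$ with probability $e^{-\lambda}$, taking $\lambda = e^{-c}$ pins the connectivity probability at $e^{-e^{-c}}$. First I would pass to the independent-edge model $G(n,p)$ with $p = (\log n + c)/n$, which corresponds to the stated $m$ through $m \approx \binom{n}{2} p$; standard coupling and monotonicity arguments (connectivity is a monotone increasing property, as defined above) let one transfer a connectivity estimate between $G(n,p)$ and $G(n,m)$, and the moment computations are technically cleaner where edges are independent.

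The core quantity is $X$, the number of isolated vertices. The probability that a fixed vertex is isolated is $(1-p)^{n-1} \approx e^{-pn} = e^{-(\log n + c)} = e^{-c}/n$, so $\mathbb{E}[X] = n(1-p)^{n-1} \to e^{-c}$. To justify that $X = 0$ is equivalent to connectivity up to negligible error, I would use the first-moment method to control components of intermediate size. For each $k$ with $2 \le k \le n/2$, the expected number of connected components spanning a set $S$ of size $k$ is bounded by summing over the $\binom{n}{k}$ choices of $S$, over the $k^{k-2}$ spanning trees on $S$ (each present with probability $p^{k-1}$), times the probability $(1-p)^{k(n-k)}$ that no edge leaves $S$; one then shows that the sum of these terms over $2 \le k \le n/2$ is $o(1)$. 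Hence with probability $1 - o(1)$ the graph fails to be connected precisely when it has an isolated vertex.

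Finally I would establish Poisson convergence of $X$ by the method of moments. The $r$-th descending factorial moment counts ordered $r$-tuples of distinct vertices all of which are isolated, which forbids $\binom{r}{2} + r(n-r)$ potential edges, giving $\mathbb{E}[(X)_r] = (n)_r (1-p)^{r(n-r)+\binom{r}{2}}$. With $r$ fixed and $n \to \infty$ this is $\approx n^r (e^{-c}/n)^r = (e^{-c})^r$, matching the factorial moments of $\mathrm{Poisson}(e^{-c})$, so $X \xrightarrow{d} \mathrm{Poisson}(e^{-c})$ and $\Pr[X=0] \to e^{-e^{-c}}$. Combining with the previous step, $\Pr[\text{connected}] = \Pr[X=0] - \Pr[X=0 \text{ and disconnected}] \ge \Pr[X=0] - o(1) \to e^{-e^{-c}}$.

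The main obstacle is Step two: bounding the intermediate components uniformly across all $k$ up to $n/2$, where the terms are smallest for $k$ near the extremes but must still be summed carefully through the transition region. A secondary subtlety is that the method of moments naturally yields the asymptotic equality $\Pr[\text{connected}] \to e^{-e^{-c}}$; to obtain a genuine one-sided bound of the form ``at least'' one would supplement the argument with Bonferroni-type inequalities truncating the inclusion--exclusion for $\Pr[X=0]$ at an odd order, which gives a valid lower bound on $\Pr[X = 0]$ and hence on the connectivity probability.
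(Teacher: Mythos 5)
The paper itself does not prove this statement: it is quoted verbatim as a classical result (the Erd\H{o}s--R\'enyi connectivity theorem) cited from \cite{FriezeK15}, so there is no internal proof to compare yours against. Your sketch is essentially the standard argument given in that reference---isolated vertices are asymptotically the only obstruction to connectivity (first-moment bound over components of size $2\le k\le n/2$ via Cayley's formula), and the number $X$ of isolated vertices converges to $\mathrm{Poisson}(e^{-c})$ by the method of factorial moments, giving $\Pr[\text{connected}]\to e^{-e^{-c}}$---and it is sound. Two technical caveats are worth flagging. First, the model transfer from $G(n,p)$ to $G(n,m)$ cannot be done with an ``up to constant factors'' equivalence such as Theorem~\ref{theorem:GnpGnm}, since that would destroy the exact constant $e^{-e^{-c}}$; one instead needs the conditioning/monotonicity argument with binomial concentration, taking $p=(\log n+c-\epsilon)/n$ and using continuity of $c\mapsto e^{-e^{-c}}$. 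Second, as you yourself note, the argument proves the asymptotic statement $\Pr[\text{connected}]\ge e^{-e^{-c}}-o(1)$; the unqualified ``at least $e^{-e^{-c}}$'' in the statement is a slight overstatement of the classical result for finite $n$ (and your Bonferroni fix controls $\Pr[X=0]$ but not the $-o(1)$ term coming from intermediate-size components), though this looseness is harmless for every use the paper makes of the theorem.
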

\begin{theorem}\cite{FriezeK15}
	For a monotone increasing graph property $\cal P$ and $p=m/{n\choose 2}$, 
	$Pr[G(n,p)\in {\cal P}]$ is equivalent to $Pr[G(n,m)\in {\cal P}]$ up to small constant factors.
	\label{theorem:GnpGnm}
\end{theorem}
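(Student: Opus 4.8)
The plan is to reconstruct the standard transference between the two models by conditioning on the number of edges. Write $N=\binom{n}{2}$ and note that in $G(n,p)$ the edge count $X$ is distributed as $\mathrm{Bin}(N,p)$ with mean $\mu=Np=m$. The structural fact I would rely on is that, conditioned on $\{X=M\}$, the edge set of $G(n,p)$ is a uniformly random $M$-subset of all possible edges, so $G(n,p)\mid\{X=M\}$ has exactly the law of $G(n,M)$. Setting $q_M=\Pr[G(n,M)\in{\cal P}]$, the law of total probability then gives
\[
\Pr[G(n,p)\in{\cal P}]=\sum_{M=0}^{N}\Pr[X=M]\,q_M .
\]
The single place where the hypothesis enters is monotonicity: coupling $G(n,M)$ with $G(n,M{+}1)$ by inserting one extra random edge shows that membership in a monotone increasing ${\cal P}$ is preserved, so $q_M$ is non-decreasing in $M$ (equivalently, the failure probability $f_M=1-q_M$ is non-increasing).

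For the comparison $\Pr[G(n,m)\in{\cal P}]\lesssim\Pr[G(n,p)\in{\cal P}]$ I would discard every term with $M<m$ and bound the remainder using $q_M\ge q_m$:
\[
\Pr[G(n,p)\in{\cal P}]\ \ge\ q_m\sum_{M\ge m}\Pr[X=M]\ =\ q_m\cdot\Pr[X\ge m].
\]
Because $m=\mu$ is the mean of a binomial, its median lies within distance one of the mean, so $\Pr[X\ge m]\ge c$ for an absolute constant $c$ (a Chernoff or normal-approximation estimate makes $c$ close to $1/2$), giving $\Pr[G(n,m)\in{\cal P}]=q_m\le c^{-1}\Pr[G(n,p)\in{\cal P}]$. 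Running the same argument on the monotone decreasing complement $\{G\notin{\cal P}\}$, now keeping only $M\le m$ and using $\Pr[X\le m]\ge c$, yields the symmetric inequality $f_m\le c^{-1}\Pr[G(n,p)\notin{\cal P}]$. Together these two bounds already transfer any high-probability statement from $G(n,p)$ to $G(n,m)$: if ${\cal P}$ holds in $G(n,p)$ with probability $1-o(1)$, then $f_m\le c^{-1}o(1)=o(1)$, so it holds in $G(n,m)$ whp as well, which is exactly the direction needed to export a $G(n,p)$ analysis to the $G(n,m)$ setting used elsewhere in the paper.

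The step I expect to be the main obstacle is the genuine converse, $\Pr[G(n,p)\in{\cal P}]\lesssim\Pr[G(n,m)\in{\cal P}]$, since the sum cannot be bounded above termwise: for $M>m$ the factor $q_M$ may strictly exceed $q_m$, and indeed for pathological properties depending on the exact edge count (e.g. ``at least $m$ edges'') no same-$m$ constant comparison can hold. The fix I would use is to absorb the upper tail via concentration of $X$ in a window $[\,m-O(\sqrt m),\,m+O(\sqrt m)\,]$, which relates $\Pr[G(n,p)\in{\cal P}]$ to $q_{m'}$ for some shifted $m'=m+O(\sqrt m)$ rather than to $q_m$ directly; for the robust, sharp-threshold monotone events actually invoked here (connectivity and the broomstick-structure events of Theorem~\ref{thm:bristle-sizeM}) the difference $q_{m'}-q_m$ is negligible, so the two probabilities coincide up to constants. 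This is why the statement is most naturally read, and used, as an equivalence of ``with high probability'' behaviour, the degenerate edge-count properties being excluded under that interpretation.
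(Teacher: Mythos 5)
The paper offers no proof of this statement at all---it is quoted as a classical result from the cited reference \cite{FriezeK15}---so there is no internal argument to compare against; your reconstruction is essentially the standard conditioning-on-the-edge-count proof that the cited source itself uses. Your two one-sided inequalities, $\Pr[G(n,m)\in{\cal P}]\le c^{-1}\Pr[G(n,p)\in{\cal P}]$ and $1-\Pr[G(n,m)\in{\cal P}]\le c^{-1}\bigl(1-\Pr[G(n,p)\in{\cal P}]\bigr)$, are correct, and they are exactly the form in which the paper invokes the theorem, namely to transfer the high-probability connectivity and broomstick statements from $G(n,p)$ to $G(n,m)$ in Section~\ref{sec:broomStick_th}. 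Your final observation is also well taken: a literal two-sided constant-factor equivalence at the same $m$ is false for edge-count-style monotone properties (e.g.\ ``has at least $m+1$ edges,'' for which $\Pr[G(n,m)\in{\cal P}]=0$ while $\Pr[G(n,p)\in{\cal P}]\approx 1/2$), so the rigorous statements must either be one-sided or compare against a shifted $m'=m\pm O(\sqrt{m})$, and the paper's informal phrasing---``equivalent up to small constant factors''---should be read in that qualified sense, which suffices for every use made of it here.
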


%
%

\subsection{Experimental Setting}
In our experimental study on random graphs, the performance of different algorithms 
is analyzed in terms of the number of edges processed, instead of the time taken. 
This is because the total time taken by an algorithm is dominated by the time taken to process the graph edges 
(see Appendix \ref{appn:perf_edge} for details). 
Furthermore, comparing the number of edges processed provides a deeper insight 
in the performance of the algorithm (see Section~\ref{sec:basic_exp}).
Also, it makes this study independent of the computing platform making it easier to reproduce and verify.
For random graphs, each experiment is averaged over several test cases to get the expected behavior.
For the sake of completeness, the corresponding experiments are also replicated measuring the time taken by 
different algorithms in Appendix \ref{appn:time_plots}.
However, for real graphs the performance is evaluated by comparing the time taken and not the edges processed.
This is to ensure an exact evaluation of the relative performance of different algorithms.

\subsection{Datasets}
In our experiments we considered the following types of datasets.
\begin{itemize}
\item
\textbf{Random Graphs: } The initial graph is the star graph, formed by adding an edge 
from the pseudo root $s$ to each vertex.
The update sequence is generated based on Erd\H{o}s R\'{e}nyi $G(n,m)$ model by choosing the first $m$ edges of a random 
permutation of all the edges in the graph. For the case of DAGs, the update sequence is generated based on extension
of $G(n,m)$ model for DAGs \cite{CordeiroMPTVW10}.
\item
\textbf{Real graphs: } 
We use a number of publically available datasets \cite{Konect13,snapDATA,bigDND14} derived from real world.
These include graphs related to Internet topology, collaboration networks, online communication, friendship networks and 
other interactions (for details refer to Section~\ref{apn:real_datasets}). 
\end{itemize}

\section{Existing algorithms}
\label{sec:exist_algos}
In this section we give a brief overview of the 
results on maintaining incremental DFS. The key ideas used in these algorithms are crucial to 
understand their behavior on random graphs.

\subsubsection*{Static DFS algorithm (SDFS)}
The basic static algorithm for evaluating the DFS tree of a graph was given by Tarjan \cite{Tarjan72}.
In the incremental version of the same, SDFS essentially computes the whole DFS tree from scratch 
after every edge insertion.

\subsubsection*{Static DFS algorithm with interrupt (SDFS-Int)}
Static DFS tree was shown to have much better performance for a random graph by 
Kapidakis \cite{Kapidakis90}. Only difference from SDFS is that the algorithm terminates as 
soon as all the vertices of the graph are marked visited. Again, the algorithm recompute the DFS tree 
from scratch after every edge insertion though requiring only $O(n\log n)$ time for random graphs. 

\subsection*{Incremental DFS for DAG/directed graph (FDFS)}
FDFS \cite{FranciosaGN97} maintains the post-order (or DFN) numbering of vertices in the DFS tree, which is used to 
rebuild the DFS tree efficiently.
On insertion of an edge $(x,y)$ in the graph, it first checks whether $(x,y)$  is an anti-cross edge by verifying 
if DFN$[x]<$DFN$[y]$. In case $(x,y)$ is not an anti-cross edge, it simply updates the graph and terminates.
Otherwise, it performs a partial DFS on the vertices reachable from $y$ in the subgraph induced by the vertices 
with DFN number between DFN$[x]$ and DFN$[y]$. In case of DAGs, this condition essentially represents a 
{\em candidate set} of vertices that lie 
in the subtrees hanging on the right of $path(LCA(x,y),x)$ or on the left of $path(LCA(x,y),y)$.
FDFS thus removes these reachable vertices from the 
corresponding subtrees and computes their DFS tree rooted at $y$ to be hanged from the edge $(x,y)$. 
The DFN number of all the vertices in candidate set is then updated to perform the next insertion efficiently.
The authors also noted that FDFS can be trivially extended to directed graphs.
Here, the {\em candidate} set includes the subtrees hanging on the right
of $path(LCA(x,y),x)$ until the entire subtree containing $y$ (say $T'$).
Note that for DAGs instead of entire $T'$, just the subtrees of $T'$ hanging on the left of 
$path(LCA(x,y),y)$ are considered. 
However, the authors did not give any bounds better than $O(m^2)$ for FDFS in directed graphs.

\subsection*{Incremental DFS for undirected graphs (ADFS)}
ADFS \cite{BaswanaK14} (refers to both ADFS1 and ADFS2) maintains a data structure that answers LCA and level ancestor
 queries.
%
%
On insertion of an edge $(x,y)$ in the graph, ADFS first verifies whether  $(x,y)$ is a cross edge
by computing $w= LCA(x,y)$ and ensuring that $w$ is not equal to either $x$ or $y$.
In case $(x,y)$ is a back edge, it simply updates the graph and terminates. Otherwise,
let $u$ and $v$ be the children of $w$ such that $x\in T(u)$ and $y\in T(v)$. Without loss of generality,
let $x$ be lower than $y$ in the $T$.  ADFS then rebuilds $T(v)$ hanging it from $(x,y)$ as follows.
It first reverses $path(y,v)$ which converts many back edges in $T(v)$ to cross edges.
It then collects these cross edges and iteratively inserts them back to the graph using the same procedure.
The only difference between ADFS1 and ADFS2 is the order in which these collected cross edges are processed.
ADFS1 processes these edges arbitrarily, whereas ADFS2 processes the cross edge with the highest endpoint first. 
For this purpose ADFS2 uses a non-trivial data structure. 
We shall refer to this data structure as $\cal D$.

\subsubsection*{Incremental DFS with worst case guarantee (WDFS)}
Despite several algorithms for maintaining DFS incrementally, 
the worst case time to update the DFS tree after an edge insertion was still $O(m)$. 
Baswana et al.~\cite{BaswanaCCK16} presented an incremental algorithm, 
giving a worst case guarantee of $O(n\log^3 n)$ on the update time. 
The algorithm builds a data structure using the current DFS tree, which
is used to efficiently rebuild the DFS tree after an edge update.
However, building this data structure requires $O(m)$ time and hence 
the same data structure is used to handly multiple updates ($\approx\tilde{O}(m/n)$). 
The data structure is then rebuilt over a period of updates using a technique called 
{\em overlapped periodic rebuilding}. 
Now, the edges processed for updating a DFS tree depends on the number of edges inserted since the 
data structure was last updated. Thus, whenever the data structure is updated, 
there is a sharp fall in the number of edges processed per update resulting in a saw like structure on 
the plot of number of edges processed per update.

\section{Experiments on Random Undirected graphs}
\label{sec:basic_exp}
We now compare the empirical performance of the existing algorithms for  incrementally 
maintaining a DFS tree of a random undirected graph. 


\newcommand{\figW}{0.325\linewidth}
\begin{figure*}[!ht]
\centering
\includegraphics[width=\figW]{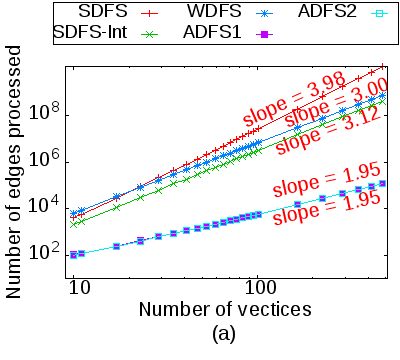}
\includegraphics[width=\figW]{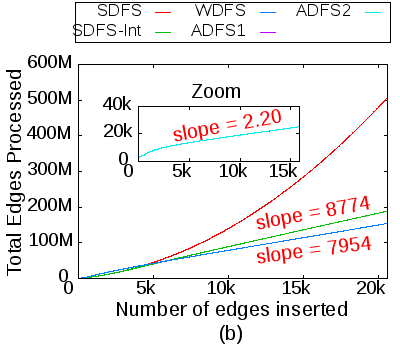}
\includegraphics[width=\figW]{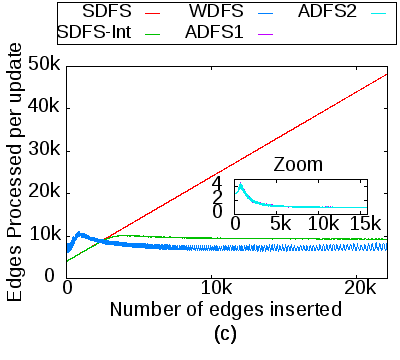}
\caption{
For various existing algorithms, the plot shows 
(a) Total number of edges processed (logarithmic scale) for insertion 
of $m={n \choose 2}$ edges for different values of $n$ ,
(b) Total number of edges processed for $n=1000$ and up to $n\sqrt{n}$ edge insertions, 
(c) Number of edges processed per update for $n=1000$ and up to $n\sqrt{n}$ edge insertions. 
See Figure \ref{fig:incDFSvarNMT} for corresponding time plot.
}
\label{fig:incDFSvarNM}
\end{figure*}

We first compare the total number of edges processed by the existing algorithms
for insertion of $m={n \choose 2}$ edges, as a function of number of vertices in
Figure \ref{fig:incDFSvarNM} (a).
Since the total number of edges is presented in logarithmic scale, the slope $x$ of a line 
depicts the growth of the total number of edges as $O(n^x)$.
The performance of SDFS, SDFS-Int and WDFS resemble their asymptotic bounds described in Table \ref{tab:exist_algo}.
For small values of $n$, WDFS performs worse than SDFS and SDFS-Int because of large difference between the constant terms 
in their asymptotic bounds, which is evident from their y-intercepts. 
However, the effect of constant term diminishes as the value of $n$ is increased. 
The most surprising aspect of this experiment is the exceptional performance of ADFS1 and ADFS2.
Both ADFS1 and ADFS2 perform extremely faster than the other algorithms.
Furthermore, ADFS1 and ADFS2 perform equally well despite the  difference in their asymptotic complexity (see Table \ref{tab:exist_algo}).\\

{\centering
	\fbox{\parbox{\linewidth}{
			{\bf Inference $I_1$:}~ 
			ADFS1 and ADFS2 perform equally well and much faster than other algorithms. 
		}}
}

\noindent
\textbf{Remark:} 
Inference $I_1$ is surprising because the complexity of ADFS1 and ADFS2 has been shown \cite{BaswanaK14} to be 
$O(n^{3/2}\sqrt{m})$ and $O(n^2)$ respectively. 
Moreover, we present a sequence of $m$ edge insertions where ADFS1 takes 
$\Omega(n^{3/2}\sqrt{m})$ time in Appendix \ref{appn:wc_ADFS}, proving the tightness of its analysis. 
Further, ADFS2 takes slightly more time than ADFS1, in order to maintain the data structure
$\cal D$ (see Figure~\ref{fig:incDFSvarNMT}).\\

We now compare the total number of edges processed by the existing algorithms as a 
function of number of inserted edges in Figure \ref{fig:incDFSvarNM} (b).
The slopes of SDFS-Int, WDFS and ADFS represent the number of edges processed per edge insertion.
Here again, the performance of SDFS, SDFS-Int and WDFS resembles with their worst case values 
(see Table \ref{tab:exist_algo}). 
Similarly, both ADFS1 and ADFS2 perform equally well as noted in the previous experiment.
When the graph is sparse ($m<<n\log^3n$), WDFS performs worse than SDFS because of high cost of update 
per edge insertion (see Table~\ref{tab:exist_algo}). Further, as expected the plots of SDFS-Int and WDFS grow linearly in $m$.
This is because their update time per insertion is independent of $m$.
However, the plots of ADFS1 and ADFS2 are surprising once again, because they become almost linear as the graph becomes denser.
In fact, once the graph is no longer sparse, each of them processes  $\approx 2$ edges per edge insertion to maintain the DFS tree.
This improvement in the efficiency of ADFS1 and ADFS2 for increasing value of $m$ is 
counter-intuitive since more edges may be processed to rebuild the DFS tree as the graph becomes denser.\\


{\centering
	\fbox{\parbox{\linewidth}{
			{\bf Inference $I_2$:}~ 
			ADFS1/ADFS2 processes $\approx 2$ edges per insertion after the insertion of $O(n)$ edges.
		}}
}



%

Finally, to investigate the exceptional behavior of ADFS1 and ADFS2, we compare the number of edges 
processed per edge insertion by the existing algorithms as a function of number of inserted edges 
in Figure \ref{fig:incDFSvarNM} (c).
Again, the expected behavior of SDFS, SDFS-Int and WDFS matches with their worst case bounds described in Table \ref{tab:exist_algo}. 
The plot of WDFS shows the saw like structure owing to {\em overlapped periodic rebuilding} of the data structure 
used by the algorithm (see Section~\ref{sec:exist_algos}).
Finally, the most surprising result of the experiment is the plot of ADFS1 and ADFS2 shown in the zoomed component of the plot.
The number of edges processed per edge insertion sharply increases to roughly $5$ (for $n=1000$) when $m$ reaches $O(n)$ followed by a sudden fall to reach $1$ asymptotically. Note that the inserted edge is also counted among the processed edges, hence essentially the number of edges processed to update the DFS tree asymptotically reaches zero as the graph becomes dense.
This particular behavior is responsible for the exceptional performance of ADFS1 and ADFS2.\\

{\centering
	\fbox{\parbox{\linewidth}{
			{\bf Inference $I_3$:}~  
			Number of edges processed by ADFS1/ADFS2 for updating the DFS tree \\asymptotically reaches zero as the graph becomes denser.
			}}
}


To understand the exceptional behavior of ADFS1 and ADFS2 for random graphs inferred in $I_1$, $I_2$ and $I_3$, 
we investigate the structure of a DFS tree for random graphs in the following section.

\section{Structure of a DFS tree: The broomstick}
\label{sec:broomStick_exp}
We know that SDFS, SDFS-Int and WDFS invariably rebuild the entire DFS tree on insertion of every edge.
We thus state the first property of ADFS that differentiates it from other existing algorithms.\\



{\centering
	\fbox{\parbox{\linewidth}{
			{\bf Property $P_1$:}~ ADFS rebuilds the DFS tree only on insertion of a cross edge. 
		}}
}

\renewcommand{\figW}{0.35\linewidth}
\newcommand{\figWN}{0.365\linewidth}
\begin{figure*}[ht]
\centering
\includegraphics[width=\figW]{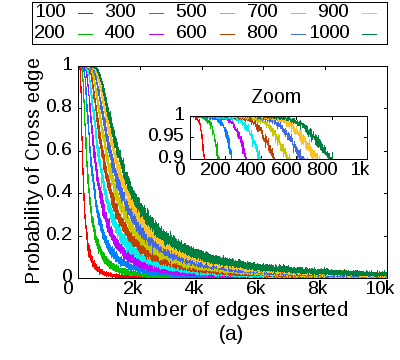}\qquad
\includegraphics[width=\figWN]{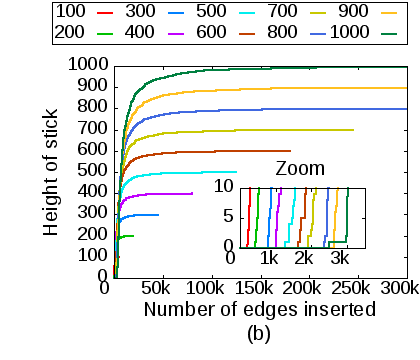}
\caption{
The variation of (a) $p_c:$ Probability of next inserted edge being a cross edge, and
(b) $l_s:$ Length of broomstick, with graph density. 
Different lines denote different number of vertices, starting at different points.
}
\label{fig:structDFS}
\end{figure*}
Let $T$ be any DFS tree of the random graph $G(n,m)$.
Let $p_c$ denote the probability that the next randomly inserted edge is a cross edge in $T$. 
We first perform an experimental study to determine the behavior of $p_c$ as the number of edges in the graph increases.
Figure \ref{fig:structDFS} (a) shows this variation of $p_c$ for different values of $n$.
%
The value $p_c$ starts decreasing sharply once the graph has $\Theta(n)$ edges.
Eventually, $p_c$ asymptotically approaches $0$ as the graph becomes denser.
Surely ADFS crucially exploits this behavior of $p_c$ in random graphs (using Property $P_1$).
In order to understand the reason behind this behavior of $p_c$, 
we study the structure of a DFS tree of a random graph.
%
%
%
\subsection*{Broomstick Structure}
The structure of a DFS tree can be described as that of a broomstick as follows. From the root of 
the DFS tree there exists a downward path on which there is no branching, i.e., every vertex has exactly one child. 
We refer to this path as the {\em stick} of the broomstick structure. The remaining part of the DFS tree (except the {\em stick})
is called the {\em bristles} of the broomstick. 




Let $l_s$ denote the length of the {\em stick} in the broomstick structure of the DFS tree.
We now study the variation of $l_s$ as the edges are inserted in the graph.
Figure \ref{fig:structDFS} (b) shows this variation of $l_s$ for different values of $n$.
Notice that the {\em stick} appears after the insertion of roughly $n\log n$ edges (see the zoomed part of Figure \ref{fig:structDFS} (b)). 
After that $l_s$ increases rapidly to reach almost $90\%$ of its height within just 
$\approx 3n\log n$ edges, followed by a slow growth asymptotically approaching its maximum height only near 
$O(n^2)$ edges.
Since any newly inserted edge with at least one endpoint on the stick necessarily becomes a back edge,
the sharp decrease in $p_c$ can be attributed to the sharp increase in $l_s$.
We now theoretically study the reason behind the behavior of $l_s$ using properties of random graphs,
proving explicit bounds for $l_s$ described in Theorem \ref{thm:bristle-sizeM}. 


\subsection{Length of the stick}
\label{sec:broomStick_th}
The appearance of broomstick after insertion of $n\log n$ edges as shown in Figure \ref{fig:structDFS} (b) 
can be explained by the connectivity threshold for random graphs (refer to Theorem \ref{theorem:Frieze}). 
Until the graph becomes connected (till $\Theta(n\log n)$ edges), 
each component hangs as a separate subtree from the pseudo root $s$, limiting the value of $l_s$ to $0$. 
To analyze the length of $l_s$ for $m=\Omega(n\log n)$ edges, we first prove a succinct bound on the probability of existence of a long path without branching during a DFS traversal in 
$G(n,p)$ in the following lemma.


\begin{lemma}
Given a random graph $G(n,p)$ with $p=(\log n_0 + c)/n_0$, for any integer $n_0\le n$ and $c\geq 1$, 
there exists a path without branching of length at least $n- n_0$ in the DFS tree of $G$ with probability at least $1-2/e^c$.
\label{lemma:dfs-path}
\end{lemma}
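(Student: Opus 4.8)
The plan is to analyze the DFS traversal through the standard lazy edge-exposure process on $G(n,p)$: we root the search at an arbitrary vertex and, whenever a vertex $v$ sits on top of the stack, we probe the as-yet-unvisited vertices one at a time, each being a neighbour of $v$ independently with probability $p$, until we either find a new child or exhaust the unvisited set (in which case $v$ is popped). The crucial observation is that before the first backtrack the explored tree is literally a path --- the \emph{first descent} $v_0,v_1,\dots$ --- so no branching can occur along it. Thus only two things can cut the stick short: the first descent can \emph{dead-end} (reach a vertex with no unvisited neighbour) before depth $n-n_0$, or, once backtracking begins, some vertex of the descent shallower than $n-n_0$ can acquire a \emph{second} child. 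I would bound each of these events by $1/e^{c}$ and finish with a union bound, matching the claimed $1-2/e^{c}$.

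For the dead-end event, when $v_i$ is probed there are exactly $n-i-1$ unvisited vertices, so $v_i$ dead-ends with probability $(1-p)^{\,n-i-1}$. Hence the descent fails to reach depth $n-n_0$ with probability at most
\[
\sum_{i=0}^{n-n_0-1}(1-p)^{\,n-i-1}=\sum_{j=n_0}^{n-1}(1-p)^{j}\le \frac{(1-p)^{n_0}}{p}\le \frac{e^{-pn_0}}{p}=\frac{1}{(\log n_0+c)\,e^{c}}\le \frac{1}{e^{c}},
\]
using $1-p\le e^{-p}$, $pn_0=\log n_0+c$, and $c\ge 1$. This geometric estimate is the computational heart of the argument and is exactly where the choice $p=(\log n_0+c)/n_0$ is calibrated.

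For the branching event I would condition on the descent reaching some depth $D\ge n-n_0$, let $R$ be the (at most $n_0-1$) still-unvisited vertices, and set $W=R\cup\{v_{n-n_0},\dots,v_D\}$, a block of exactly $n_0$ vertices. The key structural claim is that if the subgraph induced on $W$ is connected then, since $W$ already contains the path $v_{n-n_0}\!-\!\cdots\!-\!v_D$, the backtracking DFS absorbs every vertex of $R$ inside the subtree $T(v_{n-n_0})$ before popping above depth $n-n_0$; consequently no vertex shallower than $n-n_0$ ever sees an unvisited neighbour, so none of them branches and the stick has length at least $n-n_0$. Connectivity of a random graph on $n_0$ vertices with edge probability $p=(\log n_0+c)/n_0$ is precisely the regime of Theorem~\ref{theorem:Frieze} (transferred from $G(n_0,m)$ to $G(n_0,p)$ via Theorem~\ref{theorem:GnpGnm}), which yields connectivity with probability at least $e^{-e^{-c}}\ge 1-e^{-c}=1-1/e^{c}$.

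The main obstacle is that $W$ and its internal edges are not independent of the descent: the lazy exposure reveals some $W$-internal edges --- notably a few ``absent'' edges between the deep descent vertices and $R$ --- so $W$ is not a pristine $G(n_0,p)$, and the conditioning biases it \emph{against} connectivity. I expect this to be resolved by observing that each deep vertex stops probing the moment it finds a child, so it exposes only a vanishing $O(1/\log n_0)$ fraction of its edges into the unvisited set; the overwhelming majority of $W$-internal edges therefore remain fresh, and by the monotonicity of the connectivity property together with the $G(n,p)\leftrightarrow G(n,m)$ equivalence of Theorem~\ref{theorem:GnpGnm}, Theorem~\ref{theorem:Frieze} still applies up to constants that can be absorbed into $c$. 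A union bound over the two $1/e^{c}$ estimates then gives a branch-free path of length at least $n-n_0$ with probability at least $1-2/e^{c}$.
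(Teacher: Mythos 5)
Your handling of the dead-end event is correct, and is in fact cleaner than the paper's own: you sum the geometric tail $\sum_{j\ge n_0}(1-p)^j\le (1-p)^{n_0}/p\le 1/e^{c}$, whereas the paper partitions the first $n-n_0$ visited vertices, counted upward from depth $n-n_0$, into doubling blocks of sizes $n_0,2n_0,4n_0,\dots$ and lower-bounds the resulting product $\prod_{i\ge 0}\bigl(1-(1/e^{c})^{2^{i}}\bigr)$ by $1-2/e^{c}$. Be aware, though, that this descent estimate is the \emph{entire} content of the paper's proof of the lemma: the second failure mode you identify (a shallow vertex acquiring a second child during backtracking) is not treated inside the lemma at all, but only afterwards, in the derivation of Theorem~\ref{thm:bristle-sizeM}, and there precisely by the connectivity argument you propose. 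So your two-event decomposition is the honest complete argument, and your structural claim (if the block of remaining vertices induces a connected subgraph, all of them are absorbed into a single subtree before the DFS pops any vertex of depth less than $n-n_0$, so no shallower vertex ever branches) is correct.

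The genuine gap is where you place the block $W$. By defining $W=R\cup\{v_{n-n_0},\dots,v_D\}$ at the moment the descent dead-ends, you include descent vertices whose pairs into $R$ have already been probed. In particular \emph{every} pair $(v_D,r)$ with $r\in R$ is conditioned to be a non-edge (that is what dead-ending means), and more generally a descent vertex exposes roughly $1/p=n_0/(\log n_0+c)$ pairs into the then-unvisited set, which is a vanishing fraction of its pairs into $R$ only when $|R|=\Theta(n_0)$; when the descent runs deep and $|R|$ is small, the exposed fraction is not $O(1/\log n_0)$, and for $v_D$ it equals $1$. Monotonicity cannot repair conditioning on \emph{absent} edges, and weakening constants into $c$ would no longer prove the stated bound $1-2/e^{c}$. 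The fix is to define $W$ instead as the set of exactly $n_0$ vertices still unvisited at the moment the descent first reaches depth $n-n_0$. No pair with both endpoints in this $W$ has ever been probed, because a pair is exposed only when one of its endpoints is the current vertex of the traversal, and every vertex of $W$ has been unvisited throughout; hence, conditioned on the entire history, the subgraph induced on $W$ is a pristine $G(n_0,p)$. Theorems~\ref{theorem:Frieze} and~\ref{theorem:GnpGnm} then give connectivity with probability at least $e^{-e^{-c}}\ge 1-1/e^{c}$, your absorption argument applies verbatim to this $W$, and a union bound with your dead-end estimate yields exactly $1-2/e^{c}$.
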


\begin{proof}
Consider any arbitrary vertex $u=x_1$, the DFS traversal starting from $x_1$ continues along a path
without branching so long as the currently visited vertex has at least one unvisited neighbor. 
Let $x_j$ denotes the $j^{th}$ vertex visited during the DFS on $G(n,p)$ starting from $x_1$. 
The probability that $x_j$ has at least one neighbor in the unvisited graph is $1- (1-p)^{n-j}$. 
We shall now calculate the probability that $\langle x_1,\ldots, x_{n-n_0} \rangle$ is indeed a path. 
Let $v=x_{n-n_0}$. We partition this sequence from $v$ towards $u$ into contiguous subsequences such that 
the first subsequence has length $n_0$ and $(i+1)^{th}$ subsequence has length $2^i n_0$ (see Figure \ref{fig:broomE} (a)). 
The probability of occurrence of a path corresponding to the $i^{th}$ subsequence is at least \\
\centerline{$
\left( 1- \left(1-\frac{\log n_0 + c}{n_0}  \right)^{2^in_0} \right)^{2^i n_0} \ge 
\left( 1 - \left(\frac{1}{n_0e^c} \right)^{2^i}  \right)^{2^i n_0} \ge 1 - \left( \frac{1}{e^c} \right)^{2^i}
$}\\
Hence, the probability that DFS from $u$ traverses a path of length $n-n_0$ is at least
$\Pi_{i=0}^{\log_2 n} \left(1 - \frac{1}{t^{2^i}} \right)$ 
for $t=e^c$. The value of this expression is lower bounded by  $1-\frac{2}{e^c}$
using the inequality \newline 
$\Pi_{i=0}^{i=\log_2 t} \left(  1-\frac{1}{t^{2^i}}\right) > 1-\frac{2}{t}$, 
that holds for every $c\geq 1$ since it implies $t>2$.
\end{proof}
In order to establish a tight bound on the length of {\em stick}, we need to choose the smallest value of $n_0$ 
such that once we have a DFS path of length $n-n_0$ without branching, the subgraph induced by the remaining $n_0$ vertices 
is still connected. If the number of edges in this subgraph is $m_0$, 
according to Theorem \ref{theorem:Frieze}, $n_0$ has to satisfy the following inequality.\\ 
\qquad \centerline{$m_0 = \frac{n_0^2}{n^2}m \geq \frac{n_0}{2} (\log n_0 +c)$}\\
It follows from Theorem \ref{theorem:Frieze} that the subgraph on $n_0$ vertices will be connected with probability at least $e^{-e^{-c}}\geq 1-e^{-c}$.
Combining this observation with Lemma \ref{lemma:dfs-path} and Theorem \ref{theorem:GnpGnm}, 
it follows that the probability that DFS tree of $G(n,m)$ 
is a broomstick with stick length $\ge n-n_0$ is at least
$1-\frac{c_1}{e^c}$ (for a small constant $c_1$). This probability tends to 1 for any increasing function $c(n)$, 
where $c(n)\geq 1$ for all $n$. 
%
%
We would like to state the following corollary that will be used later on.
\begin{corollary}
For any random graph $G(n,m)$ with $m=2^i n \log n$, its DFS tree will have bristles of size at most $n/2^i$ with probability $1-O(1/n)$.
\label{cor:bristle-size}
\end{corollary}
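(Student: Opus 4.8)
The plan is to obtain the corollary as a direct specialization of the broomstick bound derived just above, by substituting the appropriate values of $n_0$ and $c$. Recall that the bristles are precisely the vertices lying outside the stick, so a stick of length at least $n-n_0$ is equivalent to bristles of size at most $n_0$. Thus it suffices to show that for $m=2^i n\log n$ we may take $n_0 = n/2^i$ while keeping the failure probability $O(1/n)$.

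First I would fix $n_0 = n/2^i$ and set $c=\log n$. I then need to verify the connectivity constraint on the residual subgraph, namely $m_0 = \frac{n_0^2}{n^2}m \ge \frac{n_0}{2}(\log n_0 + c)$, which was exactly the requirement for the remaining $n_0$ vertices to induce a connected subgraph via Theorem \ref{theorem:Frieze}. Substituting $m=2^i n\log n$ gives $m_0 = \frac{n\log n}{2^i}$, while the right-hand side equals $\frac{n}{2^{i+1}}(\log n - i\log 2 + c)$. Clearing the common factor, the inequality reduces to $c \le \log n + i\log 2$, which is satisfied by $c=\log n$ for every $i\ge 0$. Note also that $c=\log n \ge 1$ for all sufficiently large $n$, so the hypothesis $c\ge 1$ of Lemma \ref{lemma:dfs-path} is met.

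With these choices, the combination of Lemma \ref{lemma:dfs-path}, the connectivity bound of Theorem \ref{theorem:Frieze}, and the $G(n,p)$-to-$G(n,m)$ transfer of Theorem \ref{theorem:GnpGnm} yields that the DFS tree of $G(n,m)$ is a broomstick with stick length at least $n-n_0 = n - n/2^i$ with probability at least $1-c_1/e^c$ for a small constant $c_1$. Since $e^c = e^{\log n} = n$, this probability is $1 - c_1/n = 1-O(1/n)$. Rephrasing the stick length as bristle size then gives bristles of size at most $n/2^i$ with probability $1-O(1/n)$, as claimed.

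The only genuine computation is the arithmetic check of the connectivity constraint; everything else is an appeal to the already-established broomstick machinery. The one subtlety to watch is the behaviour of the $\log n_0$ term across the range of $i$: for very large $i$ the residual subgraph shrinks toward constant size, and one must confirm that the slack $i\log 2$ in the constraint keeps pace. Since we only ever require $c=\log n$ (and never more), the bound $c \le \log n + i\log 2$ holds comfortably for all $i\ge 0$, so no separate edge case arises and the argument goes through uniformly.
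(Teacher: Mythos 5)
Your proposal is correct and follows exactly the paper's intended derivation: the corollary is stated there as a direct specialization of the preceding broomstick analysis, obtained by taking $n_0 = n/2^i$ and $c \approx \log n$, checking the connectivity constraint $m_0 \geq \frac{n_0}{2}(\log n_0 + c)$, and reading off the failure probability $c_1/e^c = O(1/n)$. Your arithmetic (the reduction to $c \le \log n + i\log 2$) makes explicit precisely the substitution the paper leaves implicit, so there is nothing to add.
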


To demonstrate the tightness of our analysis we compare the length of the stick as predicted theoretically (for $c=1$) 
with the length  determined experimentally in Figure \ref{fig:broomE} (b), which is shown to match exactly. 
This phenomenon emphasizes the accuracy and tightness of our analysis.

\renewcommand{\figW}{0.4\linewidth}
\renewcommand{\figWN}{0.4\linewidth}
\begin{figure}
\centering
\includegraphics[width=\figW]{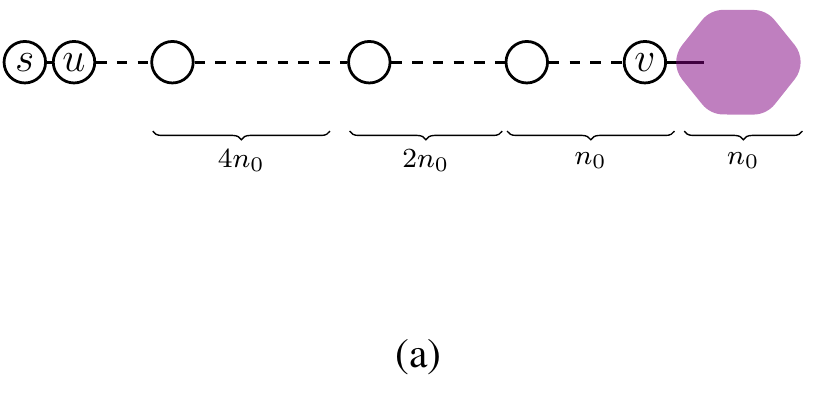}\qquad
\includegraphics[width=\figWN]{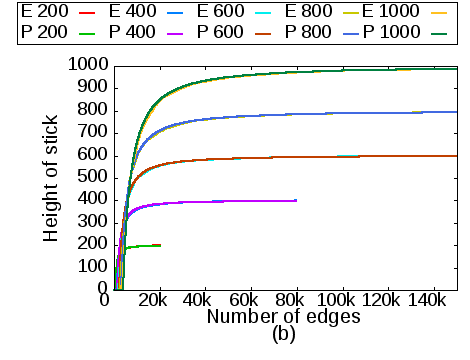}
\caption{ Estimating the length of 'stick' in the DFS tree.
(a) Partitioning the stick. 
(b) Comparison of experimentally evaluated (E) and theoretically predicted (P) value of  
length of the stick in the broomstick structure for different number of vertices.
}
\label{fig:broomE}
\end{figure}

\section{Implications of broomstick property}
\label{sec:BroomStickImp}
%
%
%
%
%
Though the broomstick structure of DFS tree was earlier studied by Sibeyn \cite{Sibeyn01}, 
the crucial difference in defining the `stick' to be without branches proved to be extremely significant.
To emphasize its significance we now present a few applications of the broomstick structure of DFS tree, 
in particular Corollary \ref{cor:bristle-size} to state some interesting results. 
Note that the absence of branches on the stick is extremely crucial for all of the following applications.

\begin{lemma}
For a uniformly random sequence of edge insertions, the number of edge insertions with both endpoints in bristles of
the DFS tree will be $O(n\log n)$ 
\label{lemma:bristle-edges}
\end{lemma}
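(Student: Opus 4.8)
The plan is to bound the expected number of such edges by linearity of expectation, summing over insertion times the probability that the newly inserted edge lands entirely within the current bristles. First I would observe that, since the update sequence is a uniformly random permutation of all edges, the graph after $m$ insertions is distributed as $G(n,m)$, and conditioned on this graph the $(m+1)$-th edge is uniform over the $\binom{n}{2}-m$ remaining non-edges. Writing $b_m$ for the (random) size of the bristles of the DFS tree of $G(n,m)$, the number of non-edges with both endpoints in the bristles is at most $\binom{b_m}{2}$, so the probability that the next edge has both endpoints in the bristles, conditioned on the current graph, is at most $\binom{b_m}{2}/(\binom{n}{2}-m)$. Hence the quantity to control is $\sum_{m} E[\binom{b_m}{2}]/(\binom{n}{2}-m)$, and the clean point here is that the bristle set is a deterministic function of the first $m$ edges while the $(m+1)$-th edge remains uniform over the rest, which is exactly what validates this per-step bound.

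Next I would feed in the broomstick bound. Generalizing Corollary \ref{cor:bristle-size} to arbitrary density by setting $2^i = m/(n\log n)$, for every $m \ge n\log n$ the bristles have size at most $n^2\log n/m$ with probability $1-O(1/n)$; on the complementary event I use only the trivial bound $b_m \le n$. Splitting the expectation accordingly gives $E[b_m^2] \le (n^2\log n/m)^2 + O(1/n)\cdot n^2$, and therefore $E[\binom{b_m}{2}] = O(n^4\log^2 n/m^2) + O(n)$.

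Then I would split the sum over $m$ into three ranges. For $m < n\log n$ the bristles may fill almost the whole graph, but there are only $O(n\log n)$ such insertions, so this range contributes $O(n\log n)$ trivially. For $n\log n \le m \le \tfrac12\binom{n}{2}$ the denominator is $\Theta(n^2)$, so each term is $O(n^2\log^2 n/m^2) + O(1/n)$; summing the first piece against $\sum_{m\ge n\log n} m^{-2} = O(1/(n\log n))$ yields $O(n\log n)$, and summing the $O(1/n)$ piece over $O(n^2)$ values of $m$ gives $O(n)$. For the final range $m > \tfrac12\binom{n}{2}$ the high-probability bristle size is only $O(\log n)$, so $\binom{b_m}{2} = O(\log^2 n)$ and the contribution, even after accounting for the shrinking denominator, is a lower-order term. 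Adding the three ranges yields the claimed $O(n\log n)$ bound.

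The step I expect to be most delicate is the treatment of the low-probability event on which the broomstick bound fails: because there are $\Theta(n^2)$ insertions in total, a careless accounting would contribute roughly $\Theta(n^2)\cdot n^2\cdot \Pr[\text{fail}]$ and swamp the bound. To avoid this I must apply the $O(1/n)$ failure probability \emph{pointwise} at each $m$ (retaining the trivial factor $b_m^2 \le n^2$ against the $\Theta(n^2)$ denominator) and verify that the aggregate failure contribution stays $O(n)$, as the arithmetic above confirms.
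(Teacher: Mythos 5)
Your proof is correct, and it rests on the same two pillars as the paper's own argument: Corollary~\ref{cor:bristle-size} (the bristle-size bound) and the observation that a uniformly random inserted edge has both endpoints in the bristles with probability roughly $(b/n)^2$, where $b$ is the current bristle size. The organization, however, is genuinely different. The paper groups insertions into dyadic phases in which the edge count grows from $2^{i-1}n\log n$ to $2^i n\log n$, bounds the bristle size throughout phase $i$ by $n_i\le n/2^{i-1}$, takes the expected number of bristle edges in the phase to be $\frac{n_i^2}{n^2}\cdot 2^i n\log n \le n\log n/2^{i-2}$, and sums the geometric series; you perform the same computation per insertion against the continuous bound $b_m\le n^2\log n/m$, i.e., the integral form of the paper's dyadic sum. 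Your version buys rigor at two points the paper passes over silently: (i) you condition the $(m+1)$-st edge correctly on being uniform over the $\binom{n}{2}-m$ remaining non-edges, whereas the paper treats endpoints as uniform over all pairs, which is valid up to constant factors only while $m\le\frac12\binom{n}{2}$ and is exactly why you need your separate near-saturation range; and (ii) you carry the $O(1/n)$ failure probability of the broomstick bound explicitly, which the paper drops. The paper's phase decomposition, in exchange, is much shorter and is the form that gets reused later (Lemma~\ref{lem:bristle-algo}). Two small repairs to your write-up: in the range $m>\frac12\binom{n}{2}$ the failure term must be capped by probability $1$ rather than by $n^2/\bigl(\binom{n}{2}-m\bigr)$, which exceeds $1$ there --- with your accounting that range contributes $O(n\log n)$, not a lower-order term, though the lemma survives either way; and your generalized corollary $b_m\le n^2\log n/m$ for all $m\ge n\log n$ (not just $m=2^in\log n$) is asserted rather than proved --- it follows by re-running the analysis of Section~\ref{sec:broomStick_th} at general $m$, or from the fact that the maintained stick never shrinks, but it deserves a line.
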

\begin{proof}
	We split the sequence of edge insertions into phases and analyze the expected number of edges inserted 
	in bristles in each phase. In the beginning of first phase there are $n\log n$ edges. In the $i^{th}$ phase, 
	the number of edges	in the graph grow from $2^{i-1}n\log n$ to $2^in\log n$. 
	It follows from Corollary \ref{cor:bristle-size} that  $n_i$, the size of bristles in the $i^{th}$ phase will be
	at most $n/2^{i-1}$ with probability $1-O(1/n)$. Notice that each edge inserted during $i^{th}$ phase will 
	choose its endpoints randomly uniformly. Therefore, the expected number of edges with both endpoints in bristles 
	during the $i$th phase would be 
	\begin{equation*}	
	 m_i = \frac{n_i^2}{n^2}m \leq 2^in\log n/2^{2(i-1)} = n\log n/2^{i-2}	
	\end{equation*}		
	Hence, the expected number of edges inserted with both endpoints in bristles is $\sum_{i=1}^{\log n} m_i =O(n\log n)$.
\end{proof}


In order to rebuild the DFS tree after insertion of a cross edge, it is sufficient to rebuild only the bristles of the 
broomstick, leaving the {\em stick} intact (as cross edges cannot be incident on it).
Corollary \ref{cor:bristle-size} describes that the size of bristles decreases rapidly as the graph becomes denser
making it easier to update the DFS tree.
This crucial insight is not exploited by the algorithm SDFS, SDFS-Int or WDFS.
We now state the property of ADFS that exploits this insight implicitly.\\

{\centering
	\fbox{\parbox{\linewidth}{
			{\bf Property $P_2$:}~ ADFS modifies only the bristles of the DFS tree keeping the stick intact. 
		}}
}

We define an incremental algorithm for maintaining a DFS for random graph to be {\em bristle oriented}
if executing the algorithm $\cal A$ on $G$ is equivalent to executing the algorithm on the subgraph induced by the bristles.
Clearly, ADFS is bristle oriented owing to property $P_2$ and the fact that it processes only the edges with both endpoints in rerooted subtree
(refer to Section \ref{sec:exist_algos}). 
We now state an important result for any  bristle oriented algorithm (and hence ADFS) as follows. 
%

\begin{lemma}
	For any bristle oriented algorithm $\cal A$ if the expected total time taken to insert the first $2n\log n$ edges 
	of a random graph is $O(n^\alpha \log^\beta n)$ (where $\alpha>0$ and $\beta\geq 0$), 
	the expected total time taken to process any sequence of $m$ edge insertions is $O(m+n^\alpha \log^\beta n)$.
	\label{lem:bristle-algo}
\end{lemma}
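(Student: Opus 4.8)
The plan is to split each edge insertion into a constant-cost examination step and the actual DFS-rebuilding work, so that the $O(m)$ term absorbs the former (the algorithm spends $O(1)$ per inserted edge merely to decide whether that edge falls inside the bristles), and it then suffices to show that the expected total rebuilding work is $O(n^{\alpha}\log^{\beta} n)$, independent of $m$. Write $g(N)=O(N^{\alpha}\log^{\beta} N)$ for the hypothesised expected cost of the first $2N\log N$ insertions on a random graph on $N$ vertices, read as a bound in the graph size, and let $W$ denote the expected rebuilding work. Since $\cal A$ is bristle oriented, every unit of rebuilding work is confined to the subgraph induced by the current bristles.

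The heart of the argument is a self-similar decomposition of $W$ driven by Corollary~\ref{cor:bristle-size}. After the first $2n\log n$ insertions the bristles span at most $n/2$ vertices with probability $1-O(1/n)$, and the work spent so far is at most $g(n)$ by hypothesis. By the bristle-oriented property, every subsequent unit of rebuilding work lives inside this bristle subgraph, which -- by self-similarity of the random-graph process -- may itself be treated as a random graph on at most $n/2$ vertices receiving a uniformly random sequence of insertions. Recursing, its own first $2(n/2)\log(n/2)$ insertions cost at most $g(n/2)$, after which its bristles shrink to at most $n/4$ vertices, and so on. Thus the rebuilding work decomposes over $O(\log n)$ levels, where level $j$ operates on a bristle subgraph of at most $n_j=n/2^{j}$ vertices and contributes at most $g(n_j)$; if $m$ is too small for some level's subgraph to reach its broomstick threshold, the recursion simply terminates earlier, which only lowers the work.

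Summing over the levels gives
\[
W \;\le\; \sum_{j=0}^{O(\log n)} g\!\left(n/2^{j}\right)
\;=\; \sum_{j\ge 0} O\!\left((n/2^{j})^{\alpha}\log^{\beta}(n/2^{j})\right)
\;\le\; O(n^{\alpha}\log^{\beta} n)\sum_{j\ge 0} 2^{-j\alpha}.
\]
Because $\alpha>0$, the geometric series converges to a constant, so $W=O(n^{\alpha}\log^{\beta} n)$; adding the $O(m)$ examination cost yields the claimed bound $O(m+n^{\alpha}\log^{\beta} n)$.

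The obstacle is twofold. First, one must justify that the bristle subgraph can legitimately be regarded as a fresh random graph on $n_j$ vertices, so that both the hypothesis and Corollary~\ref{cor:bristle-size} reapply at every level: since the bristle vertex set is itself determined by the evolving DFS tree, this needs an exchangeability argument showing that, conditioned on the bristle set, the edges landing inside it are uniformly random. Second is the probabilistic bookkeeping across the $O(\log n)$ levels. Corollary~\ref{cor:bristle-size} as stated fails with probability only $O(1/n_j)$ at a level with $n_j$ vertices, which is far too weak at the deepest levels for a naive union bound; I would instead invoke the sharper estimate of Lemma~\ref{lemma:dfs-path} by letting the confidence parameter $c$ grow (for instance $c=\Theta(\log n_j)$), driving each per-level failure probability down to $\mathrm{poly}(1/n_j)$ while keeping every bristle size below $n_j/2$, and then bound the now-negligible contribution of the residual bad events by the polynomial worst-case cost of $\cal A$. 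This second point is where I expect the real care to be required.
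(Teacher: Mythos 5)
Your proposal follows essentially the same argument as the paper's proof: decompose the insertions into $O(\log n)$ phases in which the bristle size halves (Corollary~\ref{cor:bristle-size}), apply the hypothesis rescaled to the bristle subgraph of size $n/2^{j}$ in each phase (implicitly treating it as a smaller random graph), sum the resulting geometric series using $\alpha>0$, and charge $O(1)$ per inserted edge for the stick-membership test to get the additive $O(m)$ term. The one difference is in the probabilistic bookkeeping: the paper applies Corollary~\ref{cor:bristle-size} to the whole graph $G(n,2^{i}n\log n)$ at the start of each phase, so every failure event has probability $O(1/n)$ in the original $n$ and the deep-level union-bound problem you flag never arises, whereas the exchangeability issue you raise (that edges landing in the bristles are uniform on that subgraph) is indeed used implicitly and left unargued in the paper as well.
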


\begin{proof}
	Recall the phases of edge insertions described in the proof of Lemma \ref{lemma:bristle-edges}, 
	where in the $i^{th}$ phase the number of edges in the graph grow from $2^{i-1} n\log n$ to $2^i n \log n$.
	The size of bristles at the beginning of $i^{th}$ phase is $n_i=n/2^{i-1}$  w.h.p..
	Further, note that the size of bristles is reduced to half during the first phase, 
	and the same happens in each subsequent phase  w.h.p. (see Corollary \ref{cor:bristle-size}). 
	Also, the expected number of edges added to subgraph represented by the bristles in $i^{th}$ phase is $O(n_i\log n_i)$
	(recall the proof of Lemma \ref{lemma:bristle-edges}). 
	Since $\cal A$ is bristle oriented, it will process only the subgraph induced by the bristles of size $n_i$
	in the $i^{th}$ phase.
	Thus, if $\cal A$ takes $O(n^\alpha \log^\beta n)$ time in first phase, the time taken by $\cal A$ in the 
	$i^{th}$ phase is $O(n_i^\alpha \log^\beta n_i)$. 
	The second term $O(m)$ comes from the fact that we would need to process each edge to check whether it
	lies on the stick. This can be easily done in $O(1)$ time by marking the vertices on the stick. 
	The total time taken by $\cal A$ is $O(n^\alpha \log^\beta n)$ till the end of the first phase and 
	in all subsequent phases is given by the following  
	\begin{align*}
	&= m+ \sum_{i=2}^{\log n} c n_i^\alpha \log^\beta n_i 
		&\leq \sum_{i=2}^{\log n} c \Big(\frac{n}{2^{i-1}}\Big)^\alpha \log^\beta \Big(\frac{n}{2^{i-1}}\Big) \\
	&\leq m+c n^\alpha \log^\beta n \sum_{i=2}^{\log n} \frac{1}{2^{(i-1)\alpha}} &(\text{for $\beta\geq 0$}) \\
	&\leq m+ c*c' n^\alpha \log^\beta n   &(\sum_{i=2}^{\log n} \frac{1}{2^{(i-1)\alpha}} = c', \text{for $\alpha>0$}) 
	\end{align*}
	Thus, the total time taken by $\cal A$ is $O(m+ n^\alpha \log^\beta n)$.
\end{proof}

Lemma \ref{lemma:bristle-edges} and Lemma \ref{lem:bristle-algo} immediately implies the similarity of
ADFS1 and ADFS2 as follows.
\subsubsection*{Equivalence of ADFS1 and ADFS2}
On insertion of a cross edge, ADFS performs a path reversal and collects the back edges that are now converted to cross edges,
to be iteratively inserted back into the graph.
ADFS2 differs from ADFS1 only by imposing a restriction on the order in which these collected edges are processed.
However, for sparse graphs ($m=O(n)$) this restriction does not change its worst case performance 
(see Table \ref{tab:exist_algo}).
Now, Lemma \ref{lem:bristle-algo} states that the time taken by ADFS to incrementally process any number of edges 
is of the order of the time taken to process a sparse graph (with only $2n\log n$ edges). 
Thus, ADFS1 performs similar to ADFS2 even for dense graphs.
Particularly, the time taken by ADFS1 for insertion of any $m\leq {n \choose 2}$ edges is 
$O(n^2\sqrt{\log n})$, i.e., $O(n^{3/2}m_0^{1/2})$ for $m_0=2n\log n$. Thus, we have the following theorem. 

\begin{theorem}
Given a uniformly random sequence of arbitrary length, the expected time complexity of ADFS1 for maintaining a DFS tree 
incrementally is $O(n^2\sqrt{\log n})$.
\label{theorem:theoretical-adfs1}
\end{theorem}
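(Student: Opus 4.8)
The plan is to obtain the bound as a direct consequence of the machinery built in this section, namely the bristle-oriented framework of Lemma~\ref{lem:bristle-algo} together with the known worst-case bound of ADFS1. First I would invoke the fact, already established from Property~$P_2$, that ADFS1 is bristle oriented: since a cross-edge insertion triggers only a path reversal followed by reinsertion of the resulting cross edges, all of which lie within the rerooted subtree and hence within the bristles, running ADFS1 on $G$ is equivalent to running it on the subgraph induced by the bristles. This is precisely the hypothesis required to apply Lemma~\ref{lem:bristle-algo}.

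The key quantitative step is to compute the base-case cost, i.e.\ the expected total time ADFS1 spends while the graph grows to its first $m_0 = 2n\log n$ edges. Here I would substitute $m = m_0$ into the worst-case total-time bound $O(n^{3/2}\sqrt{m})$ of ADFS1 (Table~\ref{tab:exist_algo}); since this is a monotone total bound over the whole insertion sequence, it upper-bounds the cost incurred over any prefix. This gives a base-case cost of $O(n^{3/2}\sqrt{2n\log n}) = O(n^2\sqrt{\log n})$, which matches the template $O(n^\alpha\log^\beta n)$ of Lemma~\ref{lem:bristle-algo} with $\alpha = 2$ and $\beta = 1/2$.

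With these exponents (both satisfying $\alpha > 0$ and $\beta \geq 0$), Lemma~\ref{lem:bristle-algo} immediately yields that the expected total time to process any sequence of $m$ edge insertions is $O(m + n^2\sqrt{\log n})$, the telescoping sum over phases in that lemma absorbing all subsequent phases into the first-phase cost. Finally, since a uniformly random sequence inserts at most ${n \choose 2} = O(n^2)$ distinct edges, we have $m = O(n^2) = O(n^2\sqrt{\log n})$, so the additive $O(m)$ term is dominated and the total simplifies to $O(n^2\sqrt{\log n})$, as claimed.

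As for the main obstacle: the genuinely substantive content already resides in Corollary~\ref{cor:bristle-size} and Lemma~\ref{lem:bristle-algo}, so for this theorem the only point demanding care is justifying that the worst-case bound of ADFS1, stated and proved for a full run on an $n$-vertex, $m$-edge graph, may legitimately be specialized to the prefix of $2n\log n$ edges and fed as the base case of Lemma~\ref{lem:bristle-algo}. I expect this to be routine given the monotonicity of the bound in $m$, but it is the one place where the argument must be stated precisely rather than merely cited.
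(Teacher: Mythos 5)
Your proposal is correct and follows essentially the same route as the paper: the paper likewise treats ADFS as bristle oriented (via Property $P_2$), feeds the ADFS1 worst-case bound evaluated at $m_0 = 2n\log n$ edges, i.e.\ $O(n^{3/2}m_0^{1/2}) = O(n^2\sqrt{\log n})$, into Lemma~\ref{lem:bristle-algo}, and absorbs the additive $O(m)$ term using $m \leq {n \choose 2}$. Your extra care about specializing the worst-case bound to a prefix is a valid observation but routine, since that bound holds for any insertion sequence of the given length, exactly as the paper implicitly assumes.
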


\noindent
\textbf{Remark: }
The factor of $O(\sqrt{\log n})$ in the bounds of ADFS1 and ADFS2 comes from the limitations of our analysis whereas 
empirically their performance matches exactly.

\section{New algorithms for Random Graphs}
\label{sec:new_algos}
Inspired by Lemma \ref{lemma:bristle-edges} and Lemma \ref{lem:bristle-algo} we propose the following 
new algorithms.
\subsubsection*{Simple variant of SDFS (SDFS2) for random undirected graphs} 
We propose a {\em bristle oriented} variant of SDFS which satisfies the properties $P_1$ and $P_2$ of ADFS, i.e.,
it rebuilds only the bristles of the DFS tree on insertion of only cross edges.
This can be done by marking the vertices in the bristles as unvisited and performing the DFS traversal 
from the root of the bristles. Moreover, we also remove the non-tree edges incident on the {\em stick} of the 
DFS tree.
As a result, SDFS2 would process only the edges in the bristles, making it {\em bristle oriented}.
Now, according to Lemma \ref{lem:bristle-algo} the time taken by SDFS2 for insertion of $m= 2n\log n$ edges 
(and hence any $m\leq {n \choose 2}$) is $O(m^2)=O(n^2\log^2 n)$. 
%
Thus, we have the following theorem.
\begin{theorem}
Given a random graph $G(n,m)$, the expected time taken by SDFS2 for maintaining a DFS tree of $G$ 
incrementally is $O(n^2\log^2 n)$.
\label{theorem:theoretical-sdfs2}
\end{theorem}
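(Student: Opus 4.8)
The plan is to invoke Lemma~\ref{lem:bristle-algo} with the parameter pair $(\alpha,\beta)=(2,2)$, so that the only real work is to verify its hypotheses for SDFS2 and then to bound the cost of the very first phase of insertions. First I would confirm that SDFS2 is genuinely \emph{bristle oriented} in the sense the lemma requires. By construction SDFS2 satisfies Properties~$P_1$ and $P_2$: it rebuilds only upon insertion of a cross edge, and when it does so it marks the bristle vertices unvisited and reruns the DFS from the root of the bristles, leaving the stick untouched. Since a cross edge can never be incident on a stick vertex (such a vertex has exactly one child, by definition of the stick), discarding the non-tree edges incident on the stick is safe and each recomputation touches only the subgraph induced by the bristles. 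Hence running SDFS2 on $G$ is equivalent to running it on the bristle-induced subgraph, which is precisely the definition of bristle oriented.

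Next I would bound the expected time spent inserting the first $2n\log n$ edges, i.e.\ the quantity that must equal $O(n^\alpha\log^\beta n)$ in the hypothesis of Lemma~\ref{lem:bristle-algo}. Each insertion triggers at most one rebuild, and a single rebuild is an ordinary static DFS over the bristle subgraph, costing $O(m_b)$ time where $m_b$ is the current number of edges in the bristles; crudely $m_b\le m$. With $m\le 2n\log n$ throughout this phase and at most $2n\log n$ insertions, the total (even worst-case) cost is $O\big((2n\log n)^2\big)=O(n^2\log^2 n)$, which certainly dominates the expectation. This identifies $\alpha=2$ and $\beta=2$.

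Finally, Lemma~\ref{lem:bristle-algo} upgrades this first-phase bound to an all-phases bound of $O(m+n^2\log^2 n)$ for any sequence of $m\le {n \choose 2}$ insertions; the geometric decay of the bristle size across phases (Corollary~\ref{cor:bristle-size}) ensures the later phases contribute only a constant factor beyond the first. Since $m=O(n^2)$ and $n^2\log^2 n$ dominates $n^2$, the $m$ term is absorbed, giving the claimed total expected time $O(n^2\log^2 n)$. I expect the main obstacle to be the first step rather than the arithmetic: one must argue carefully that SDFS2's ``rebuild only the bristles'' strategy never needs to touch the stick and therefore meets the bristle oriented precondition, since the entire reduction rests on Lemma~\ref{lem:bristle-algo} being applicable in the first place.
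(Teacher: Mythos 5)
Your proposal is correct and follows essentially the same route as the paper: establish that SDFS2 is bristle oriented (via Properties $P_1$ and $P_2$ and discarding non-tree edges on the stick), bound the first $2n\log n$ insertions by the trivial $O(m^2)=O(n^2\log^2 n)$ static-rebuild cost, and then invoke Lemma~\ref{lem:bristle-algo} with $(\alpha,\beta)=(2,2)$ to absorb all later phases. The paper's own argument is just a terser version of exactly this reduction, so there is nothing to add.
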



We now compare the performance of the proposed algorithm SDFS2 with the existing algorithms. 
Figure \ref{fig:incDFS-prop} (a) compares the total number of edges processed for insertion of $m={n \choose 2}$ edges, 
as a function of number of vertices in the logarithmic scale. 
As expected SDFS2 processes $\tilde{O}(n^2)$ edges similar to ADFS. 
Figure \ref{fig:incDFS-prop} (b) compares the number of edges processed per edge insertion 
as a function of number of inserted edges.
Again, as expected SDFS2 performs much better than WDFS and SDFS-Int, 
performing asymptotically equal to ADFS as the performance differs only when the graph is very sparse ($\approx n\log n$).
Interestingly, despite the huge difference in number of edges processed by SDFS2 with respect to ADFS 
(see Figure \ref{fig:incDFS-prop} (a)), it is faster than ADFS2 and equivalent to ADFS1 in practice 
(see Figure \ref{fig:incDFS-propT} (a)).

\renewcommand{\figW}{0.34\linewidth}
\begin{figure}[h]
	\centering
	\includegraphics[width=\figW]{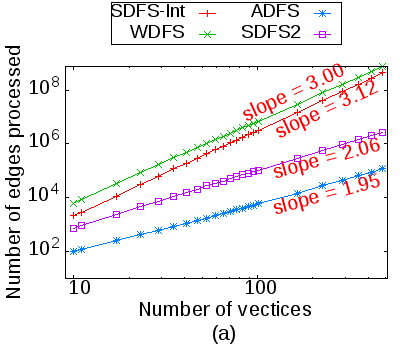}\qquad
	\includegraphics[width=\figW]{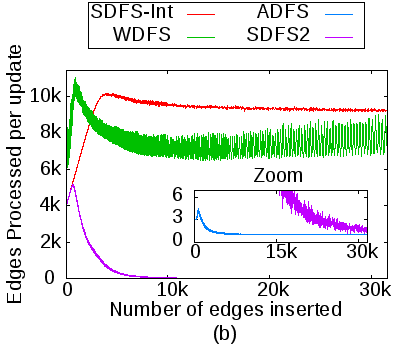}
	\caption{Comparison of existing and proposed algorithms on undirected graphs:
		(a) Total number of edges processed for insertion 
		of $m={n \choose 2}$ edges for different values of $n$ in logarithmic scale 
		(b)	Number of edges processed per edge insertion 
		for $n=1000$ and up to $n\sqrt{n}$ edge insertions. 	
		See Figure \ref{fig:incDFS-propT} for corresponding time plot.
	}
	\label{fig:incDFS-prop}
\end{figure}

\subsection*{Experiments on directed graphs and directed acyclic graphs} 
\label{appn:ExtDDAG}
The proposed algorithm SDFS2 also works for directed graphs.
It is easy to show that Corollary \ref{cor:bristle-size} also holds for directed graphs 
(with different constants).
Thus, the properties of broomstick structure and hence the analysis of SDFS2 can also be proved for directed graphs 
using similar arguments.
The significance of this algorithm is highlighted by the fact that there \textit{does not} exists any algorithm 
for maintenance of incremental DFS for general directed graphs requiring $o(m^2)$ time. 
Moreover, FDFS also performs very well and satisfies the properties $P_1$ and $P_2$ 
(similar to ADFS in undirected graphs). Note that extension of FDFS for directed graphs is not known to have complexity any better than $O(m^2)$, yet for random directed graphs we can prove it to be $\tilde{O}(n^2)$ using Lemma \ref{lem:bristle-algo}.

\renewcommand{\figW}{0.35\linewidth}
\begin{figure}[h]
	\centering
	\includegraphics[width=\figW]{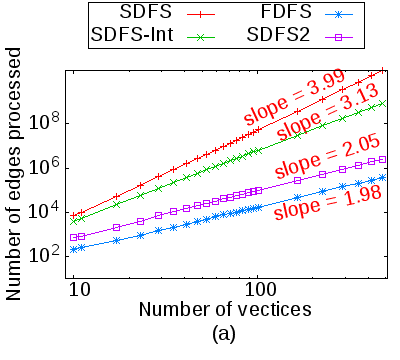}\qquad
	\includegraphics[width=\figW]{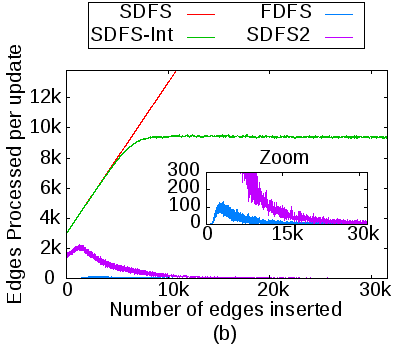}
	\caption{Comparison of existing and proposed algorithms on directed graphs:
		(a) Total number of edges processed for insertion of $m={n \choose 2}$ edges 
		for different values of $n$ in logarithmic scale 
		(b)	Number of edges processed per edge insertion 
		for $n=1000$ and up to $n\sqrt{n}$ edge insertions. 	
		See Figure \ref{fig:incDFS-propDT} for corresponding time plot.
	}
	\label{fig:incDFS-propD}
\end{figure}

We now compare the performance of the proposed algorithm SDFS2 with the existing algorithms in the directed graphs. 
Figure \ref{fig:incDFS-propD} (a) compares the total number of edges processed for insertion of $m={n \choose 2}$ edges, 
as a function of number of vertices in the logarithmic scale. 
As expected SDFS2 processes $\tilde{O}(n^2)$ edges similar to FDFS. 
Figure \ref{fig:incDFS-propD} (b) compares the number of edges processed 
per edge insertion as a function of number of inserted edges for directed graphs.
Thus, the proposed SDFS2 performs much better than SDFS, and asymptotically equal to FDFS.
Again despite the huge difference in number of edges processed by SDFS2 with respect to FDFS, 
it is equivalent to FDFS in practice (see Figure \ref{fig:incDFS-propD} (a) and Figure \ref{fig:incDFS-propDT} (a)).

\renewcommand{\figW}{0.33\linewidth}
\renewcommand{\figWN}{0.32\linewidth}
\begin{figure}[!h]
	\centering
	\includegraphics[width=\figW]{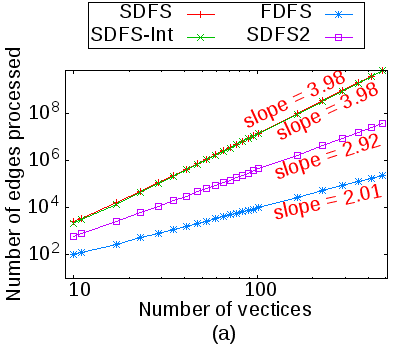}
	\includegraphics[width=\figW]{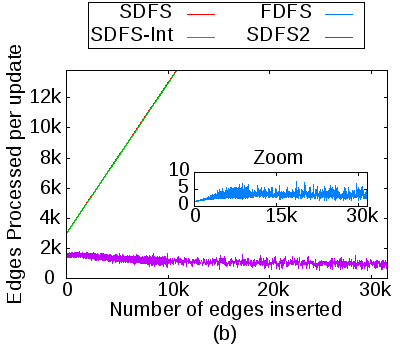}
	\includegraphics[width=\figWN]{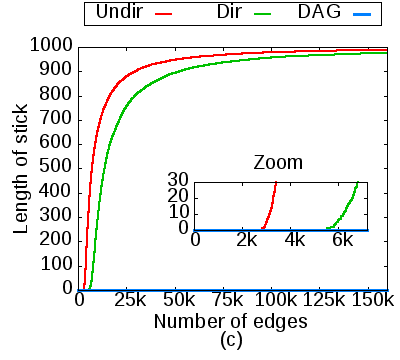}
	\caption{Comparison of existing and proposed algorithms on DAGs:
		(a) Total number of edges processed for insertion of $m={n \choose 2}$ edges 
		for different values of $n$ in logarithmic scale 
		(b)	Number of edges processed per edge insertion 
		for $n=1000$ and up to $n\sqrt{n}$ edge insertions. 	
		See Figure \ref{fig:incDFS-propDAGT} for corresponding time plots.
		(c) Comparison of variation of length of broomstick for 1000 vertices and different values of $m$.
		Different lines denote the variation for different type of graphs. Zoomed portion shows the start of each line.
	}
	\label{fig:incDFS-propDAG}
\end{figure}


Finally, we compare the performance of the proposed algorithm SDFS2 with the existing algorithms in DAGs. 
Figure \ref{fig:incDFS-propDAG} (a) compares the total number of edges processed for insertion of $m={n \choose 2}$ edges, 
as a function of number of vertices in the logarithmic scale. 
Both SDFS and SDFS-Int perform equally which was not the case when the experiment was performed on undirected 
(Figure \ref{fig:incDFSvarNM}) or directed graphs (Figure \ref{fig:incDFS-propD}).
Moreover, SDFS2 processes around $\tilde{O}(n^3)$ edges which is more than the proven bound of $\tilde{O}(n^2)$ 
for undirected and directed graphs.
However, FDFS processes $\tilde{O}(n^2)$ edges as expected. 
Figure \ref{fig:incDFS-propDAG} (b) compares the number of edges processed per edge insertion as a function of number 
of inserted edges.
Again, both SDFS and SDFS-Int perform similarly and SDFS2 does not perform asymptotically equal to FDFS even for dense graphs.
Notice that the number of edges processed by SDFS2 does not reach a peak and then 
asymptotically move to zero as in case of undirected and general directed graphs.
Also, FDFS performs much better (similar to ADFS for undirected graphs) for DAGs as compared to directed graphs.
Again, despite superior performance on random DAGs, for general DAGs it can be shown that the analysis of FDFS 
is indeed tight (see Appendix~\ref{appn:wc_FDFS}).

%

To understand the reason behind this poor performance of SDFS-Int and SDFS2 on DAGs,
we compare the variation in length of broomstick for the undirected graphs, general directed graphs and DAGs in Figure \ref{fig:incDFS-propDAG} (c).
The length of the broomstick varies as expected for undirected and general directed graphs but always remains zero for DAGs. 
This is because the stick will appear only if the first neighbor of the pseudo root $s$ visited 
by the algorithm is the first vertex (say $v_1$) in the topological ordering of the graph. 
Otherwise $v_1$ hangs as a separate child of $s$ because it not reachable from any other vertex in the graph.
Since the edges in $G(n,m)$ model are permuted randomly, with high probability $v_1$ may not be the first vertex to get connected to $s$.
The same argument can be used to prove branchings at every vertex on the stick. Hence, with high probability there would be some bristles even on the pseudo root $s$. This explains why SDFS-Int performs equal to SDFS as it works same as SDFS 
until all the vertices are visited. SDFS2 only benefits from the inserted edges being reverse cross edges which are valid 
in a DFS tree and hence avoids rebuilding on every edge insertion. Thus, Corollary \ref{cor:bristle-size} and hence the bounds for SDFS2 proved in Theorem \ref{theorem:theoretical-sdfs2} are not valid for the case of DAGs as resulting in performance described above.
Moreover, the absence of the broomstick phenomenon can also be proved for other models of random graphs for DAGs \cite{CordeiroMPTVW10}
using the same arguments.\\

Finally, Lemma \ref{lemma:bristle-edges} also inspires the following interesting applications of 
SDFS2 in the semi-streaming environment 
as follows.

\subsection*{Semi-streaming algorithms}
\label{sec:semi-stream}

In the streaming model we have two additional constraints. 
Firstly, the input data can be accessed only sequentially in the form of a stream.
The algorithm can do multiple passes on the stream, but cannot access the entire stream.
Secondly, the working memory is considerably smaller than the size of the entire input stream. 
For graph algorithms, a semi-streaming model allows the size of the working memory to be $\tilde{O}(n)$.

The DFS tree can be trivially computed using $O(n)$ passes over the input graph in the 
semi-streaming environment, each pass adding one vertex to the DFS tree. However, computing 
the DFS tree in even $\tilde{O}(1)$ passes is considered hard~\cite{Farach-ColtonHL15}.
To the best of our knowledge, it remains an open problem to compute the DFS tree using even 
$o(n)$ passes in any relaxed streaming environment~\cite{ConnellC09,Ruhl03}.
Now, some of the direct applications of a DFS tree in undirected graphs are answering connectivity, 
bi-connectivity and 2-edge connectivity queries. All these problems are addressed efficiently 
in the semi-streaming environment using a single pass by the classical work of 
Westbrook and Tarjan \cite{WestbrookT92}. On the other hand, for the applications of a DFS tree 
in directed graphs as strong connectivity, strong lower bounds of space for single-pass 
semi-streaming algorithms have been shown .
Borradaile et al. \cite{BorradaileMM14} showed that any algorithm 
requires a  a working memory of $\Omega(\epsilon m)$ to answer queries of strong connectivity, 
acyclicity or reachability from a  vertex require with probability greater than $(1+\epsilon)/2$. 

We now propose a semi-streaming algorithm for maintaining Incremental DFS for random graphs.
The key idea to limit the storage space required by this algorithm is to just discard those 
edges from the stream whose at least one endpoint is on the stick of the DFS tree. 
As described earlier, this part of DFS tree corresponding to the stick will never be modified 
by the insertion of any edge. 
If both the endpoints of the edge lie in bristles, we update the DFS tree using ADFS/SDFS2. 
Lemma \ref{lemma:bristle-edges} implies that the expected number of edges stored will be $O(n\log n)$.
In case we use  SDFS2 (for directed graphs) we also delete the non-tree edges incident on the {\em stick}.
Hence, we have the following theorem.
\begin{theorem}
	Given a random graph $G(n,m)$, there exists a single pass semi-streaming algorithm for maintaining the 
	DFS tree incrementally, that requires $O(n\log n)$ space.
	\label{theorem:semi-streaming}
\end{theorem}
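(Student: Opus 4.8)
The plan is to exhibit an explicit single-pass algorithm and then bound its working memory using the broomstick structure, in particular Lemma~\ref{lemma:bristle-edges}. The algorithm maintains the current DFS tree $T$ together with the set of non-tree edges whose both endpoints currently lie in the bristles, and it keeps a boolean mark on each vertex indicating whether that vertex lies on the stick. When the next edge $(x,y)$ arrives in the stream, I first test in $O(1)$ time whether either endpoint is marked as a stick vertex; if so, the edge is discarded immediately and $T$ is left unchanged. If instead both endpoints lie in the bristles, I feed $(x,y)$ to ADFS (undirected) or SDFS2 (directed), which rebuilds only the bristles, and then re-mark those bristle vertices that have been absorbed into the (now longer) stick, discarding the stored edges incident on them.

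First I would establish correctness, i.e. that discarding every stick-incident edge is safe. The stick is a single root-to-branch path on which every vertex has exactly one child, so each stick vertex is an ancestor in $T$ of every vertex lying below it; consequently any edge with an endpoint on the stick joins an ancestor--descendant pair and is therefore a back edge, which is always consistent with a DFS tree and, by Property $P_1$, never triggers a rebuild. Moreover, by Property $P_2$ the updates performed by ADFS/SDFS2 modify only the bristles and leave the stick intact, so a vertex once placed on the stick remains on the stick for the rest of the stream; in other words the stick length is monotonically non-decreasing. Hence an edge discarded because it touched the stick would remain a back edge forever, and its absence can never invalidate any future DFS tree, so the algorithm maintains a correct DFS tree after every insertion.

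Next I would bound the space. The DFS tree, the stick marks, and the auxiliary structures of ADFS/SDFS2 take $O(n)$ space. The only other data stored are the non-tree edges with both endpoints in the bristles, and an edge is ever retained only if, at the moment of its insertion, both of its endpoints lay in the bristles. By Lemma~\ref{lemma:bristle-edges}, for a uniformly random insertion sequence the total number of such insertions over the whole stream is $O(n\log n)$ in expectation, so at most $O(n\log n)$ edges are ever retained even if none were subsequently discarded. Combining the two contributions yields the claimed $O(n\log n)$ expected space bound, and since the algorithm inspects each arriving edge exactly once the procedure is genuinely single-pass.

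The main obstacle I anticipate is the correctness argument rather than the counting: one must be careful that the stick really is monotone under the rebuilding operation, since otherwise a vertex could migrate from the stick back into the bristles and the edges we discarded incident on it would be needed again. This monotonicity is exactly what Property $P_2$ guarantees, and it is precisely the place where the paper's stricter definition of \emph{stick} (no branches permitted) is essential: under Sibeyn's looser definition~\cite{Sibeyn01}, in which branches are allowed on the stick, a branching stick vertex could later require relocation, and discarding its incident edges would no longer be safe.
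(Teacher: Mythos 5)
Your proposal is correct and follows essentially the same route as the paper: discard any streamed edge touching the stick, feed bristle--bristle edges to ADFS/SDFS2, and invoke Lemma~\ref{lemma:bristle-edges} to bound the stored edges by $O(n\log n)$ in expectation. The additional correctness details you supply (stick-incident edges are ancestor--descendant edges, and the stick is monotone by Property $P_2$) are exactly the facts the paper relies on implicitly, so there is no substantive difference in approach.
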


Further, for random graphs even strong connectivity can be solved using a single pass in the 
streaming environment by SDFS2 as follows. 
Now, SDFS2 keeps only the tree edges and the edges in the bristles. 
For answering strong connectivity queries, we additionally store the highest edge from each 
vertex on the stick. The strongly connected components can thus be found by a single traversal 
on the DFS tree \cite{Tarjan72}. 
Thus, our semi-streaming algorithm SDFS2 not only gives a solution for strong connectivity
in the streaming setting but also establishes the difference in its hardness for general graphs 
and random graphs. To the best of our knowledge no such result was known for any graph problem 
in streaming environment prior to our work. Thus, we have the following theorem.

\begin{theorem}
	Given a random graph $G(n,m)$, there exists a single pass semi-streaming algorithm for maintaining 
	a data structure that answers 
	strong connectivity 
	queries in $G$ incrementally, 
	requiring $O(n\log n)$ space.
\end{theorem}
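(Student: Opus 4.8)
The plan is to build directly on Theorem~\ref{theorem:semi-streaming}, which already supplies a single-pass semi-streaming algorithm (SDFS2) maintaining the DFS tree of $G(n,m)$ incrementally in $O(n\log n)$ space. That algorithm retains exactly the $O(n)$ tree edges together with all edges having both endpoints in the bristles, whose number is $O(n\log n)$ in expectation by Lemma~\ref{lemma:bristle-edges}. The only remaining task is to augment this maintained structure with just enough extra information that the strongly connected components of the current graph can be recovered at query time, while keeping the total space within $O(n\log n)$.

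First I would augment SDFS2 so that for every vertex $v$ on the stick we keep (a pointer to) the \emph{highest} back edge emanating from $v$, i.e.\ the back edge from $v$ to the ancestor of smallest DFN. Since the stick has at most $n$ vertices and we store one edge per stick vertex, this adds only $O(n)$ space; combined with the $O(n\log n)$ tree-and-bristle edges of SDFS2 the total stays $O(n\log n)$, and the procedure remains a single pass because SDFS2 reads each stream edge once and discards (rather than re-reads) those incident on the stick. A query would then be answered by running the classical SCC traversal of \cite{Tarjan72} once over the maintained DFS tree using the retained edges to assign an SCC label to each vertex; two vertices are strongly connected iff their labels agree.

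The correctness of this augmentation is the crux, and it rests on the unbranched nature of the stick (Property $P_2$ and the broomstick structure). The argument I would give classifies every \emph{discarded} non-tree edge by its position relative to the stick and checks that none affects the SCC partition. An edge with both endpoints in the bristles is retained, so any SCC confined to the bristles is computed correctly. A back edge internal to the stick (from a lower stick vertex to a higher one) influences a vertex's lowlink only through the back edge reaching the highest ancestor, since lowlink is a minimum over reachable ancestors; hence keeping only the highest back edge per stick vertex — exactly the stored data — loses no information about SCCs contained in the stick. Any edge directed downward from the stick (to a lower stick vertex or into the bristles) is a forward edge and never contributes to a lowlink or creates new cyclic reachability, so it is safely discarded.

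The hard part will be the one remaining class: back edges directed from a bristle vertex up to a stick vertex, which SDFS2 discards and which can merge an entire suffix of the stick with a bristle component into a single SCC. To handle these within the space budget, the plan is to maintain, as each such edge is seen and before it is dropped, the topmost stick vertex reached by any back edge out of the bristles; this is a single value updated in $O(1)$ per edge and costs no asymptotic space. Feeding this value into the traversal collapses precisely the correct suffix of the stick together with the bristles. The substantive work is to verify that these cases are exhaustive and that the resulting labelling coincides with the true SCC partition after every insertion; once that is established, the space bound $O(n\log n)$ follows immediately from Lemma~\ref{lemma:bristle-edges} and the $O(n)$ accounting above, giving the theorem.
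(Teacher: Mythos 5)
Your overall architecture matches the paper's: augment the single-pass SDFS2 structure of Theorem~\ref{theorem:semi-streaming} with $O(n)$ additional ``highest'' edges and recover the strongly connected components at query time by a single traversal of the stored subgraph in the spirit of \cite{Tarjan72}. Your treatment of three of the four edge classes is also sound: forward edges are reachability-redundant, bristle--bristle edges are retained verbatim, and among back edges emanating from a fixed vertex only the highest one matters, since any discarded back edge $(x,y)$ can be simulated by the retained higher edge $(x,y^*)$ followed by the tree path from $y^*$ down to $y$.

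The genuine gap is your handling of back edges from bristle vertices to stick vertices. You compress all of them into a single global value (the topmost stick vertex hit by any such edge), but this discards the identity of the tail, on which the SCC partition depends. Concretely, let the stick be $v_1,\dots,v_k$ and let $a,b$ be two bristle children of $v_k$ with no edges between them; insert the edge $(a,v_1)$. Then $\{v_1,\dots,v_k,a\}$ is one strongly connected component while $b$ is a singleton, yet your stored data (tree, bristle edges, per-stick-vertex highest back edges, and the value $v_1$) is identical whether the arriving edge was $(a,v_1)$ or $(b,v_1)$, so no query procedure over it can answer both ``is $a$ strongly connected to $v_1$?'' and ``is $b$ strongly connected to $v_1$?'' correctly. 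Your phrase ``collapses the correct suffix of the stick together with the bristles'' is exactly where this breaks: only the bristle vertices on the tree path to the tail of that edge (plus whatever merges with them through retained bristle edges) join the component, not all bristles. The paper's construction avoids this by storing, for each vertex (bristle or stick), the highest \emph{edge} from it onto the stick --- the actual edge, tail included --- which is still only $O(n)$ extra space; with that, the subsumption argument above applies to every discarded edge and the traversal computes the true SCC partition. (A further point you would need to spell out: since the stick grows over time and SDFS2 for directed graphs deletes non-tree edges that become incident on the stick, these per-vertex records must be updated at the moment such edges are deleted; this is routine but absent from your write-up.)
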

\section{Incremental DFS on real graphs}
\label{sec:real_exp}
We now evaluate the performance of existing and proposed algorithms on real graphs.
Recall that for random graphs, bristles represent the entire DFS tree until the insertion of $\Theta(n\log n)$ edges.
This forces SDFS2 to rebuild the whole tree requiring total $\Omega(n^2)$ time even for sparse random graphs, 
whereas ADFS and FDFS only partially rebuild the DFS tree and turn out to be much better for sparse random graphs 
(see Figure \ref{fig:incDFS-prop} (b), \ref{fig:incDFS-propD} (b) and \ref{fig:incDFS-propDAG} (b)). 
Now, most graphs that exist in real world are known to be sparse \cite{Melancon06}.
Here again, both ADFS and FDFS perform much better  
as compared to SDFS2 and other existing algorithms.
Thus, we propose another simple variant of SDFS (SDFS3), which is both easy to implement and 
performs very well even on real graphs (much better than SDFS2).

\subsection{Proposed algorithms for real graphs (SDFS3)}
\label{appn:real_proposed}
%

The primary reason behind the superior performance of ADFS and FDFS is the partial rebuilding of the DFS tree upon
insertion of an edge. However, the partial rebuilding by SDFS2 is significant only when the broomstick has an appreciable size,
which does not happen until the very end in most of the real graphs. 
With this insight, we propose new algorithms for directed and undirected graphs.
The aim is to rebuild only that region of the DFS tree which is affected by the edge insertion.

%
%

\begin{itemize}

\item \textbf{Undirected Graphs}\\
On insertion of a cross edge $(x,y)$, ADFS rebuilds one of the two {\em candidate} subtrees hanging from $LCA(x,y)$ 
containing $x$ or $y$. We propose algorithm SDFS3 that will rebuild only the smaller subtree (less number of vertices) 
among the two candidate subtrees (say $x\in T_1$ and $y\in T_2$). This heuristic is found to be 
extremely efficient compared to rebuilding one of $T_1$ or $T_2$ arbitrarily.
The smaller subtree, say $T_2$, can be identified efficiently by simultaneous traversal in both $T_1$ and $T_2$. 
and terminate as soon as either one is completely traversed. This takes time of the order of $|T_2|$.
We then mark the vertices of $T_2$ as unvisited and start the traversal from $y$ in $T_2$,
hanging the newly created subtree from edge $(x,y)$. 
%
%
%
%
\item  \textbf{Directed Graphs}\\
On insertion of an anti-cross edge $(x,y)$, FDFS rebuilds the vertices reachable from $y$ in the subgraph induced 
by a {\em candidate set} of subtrees described in Section~\ref{sec:exist_algos}.
FDFS identifies this affected subgraph using the DFN number of the vertices.
Thus, this DFN number also needs to be updated separately after rebuilding the DFS tree.
This is done by building an additional data structure while the traversal is performed, which
aids in updating the DFN numbers efficiently.
%
We propose SDFS3 to simply mark all the subtrees in this {\em candidate set} as unvisited and proceed the
traversal from $(x,y)$. The traversal then continues from each unvisited root of the subtrees marked earlier,
implicitly restoring the DFN number of each vertex.

\end{itemize}

\subsection{Experimental Setup}
The algorithms are implemented in C++ using STL (standard template library),
and built with GNU g++ compiler (version 4.4.7) with optimization flag $-O3$.
The correctness of our code was exhaustively verified on random inputs by ensuring 
the absence of anti-cross edges (or cross edge) in directed (or undirected) graphs.
Our experiments were run on Intel Xeon E5-2670V 2.5 GHz 2 CPU-IvyBridge (20-cores per node) on 
HP-Proliant-SL-230s-Gen8 servers with 1333~MHz DDR3 RAM of size 768~GB per node.
Each experiment was performed using a single dedicated processor. 

\subsection{Datasets used for evaluation on Real Graphs}
\label{apn:real_datasets}
We consider the following types of graphs in our experiments:
\begin{itemize}
	\item \textbf{Internet topology:} 
	These datasets represent snapshots of network topology on CAIDA project (\textit{asCaida} \cite{LeskovecKF05,snapDATA}),
	Oregon Route Views Project's Autonomous Systems  (\textit{ass733} \cite{LeskovecKF05,snapDATA}) 
	and Internet autonomous systems (\textit{intTop} \cite{ZhangLMZ05,Konect13}).

	\item \textbf{Collaboration networks:} 
	 These datasets represent the collaboration networks as recorded on arxiv's High-Energy-Physics 
	 groups of Phenomenology (\textit{arxvPh} \cite{LeskovecKF07,bigDND14,Konect13}) and Theory 
	 (\textit{arxvTh} \cite{LeskovecKF07,bigDND14,Konect13}), and on DBLP (\textit{dblp} \cite{DBLP02,bigDND14,Konect13}). 

	\item \textbf{Online communication:}
	These datasets represent communication of linux kernel messages (\textit{lnKMsg} \cite{Konect13}), 
	Gnutella peer-to-peer file sharing network (\textit{gnutella} \cite{RipeanuIF02,snapDATA}),
	Slashdot's message exchange (\textit{slashDt} \cite{GomezKV08,Konect13}), Facebook's wall posts 
	(\textit{fbWall} \cite{ViswanathMCG09,Konect13}), Democratic National Committee's (DNC) email correspondence
	(\textit{dncCoR} \cite{Konect13}), Enron email exchange (\textit{enron} \cite{Klimt04,Konect13}), 
	Digg's reply correspondence (\textit{digg} \cite{Choudhury09,Konect13}) and 
	UC Irvine message exchange (\textit{ucIrv} \cite{Opsahl09,Konect13})
	
	\item \textbf{Friendship networks:} 
	These datasets represent the friendship networks of Flickr (\textit{flickr} \cite{MisloveKGDB08,Konect13}, 
	Digg (\textit{diggNw} \cite{HoggL12,Konect13}), Epinion	(\textit{epinion} \cite{Massa05,Konect13}), 
	Facebook (\textit{fbFrnd} \cite{ViswanathMCG09,Konect13}) and Youtube (\textit{youTb} \cite{Mislove09,Konect13}). 

	\item \textbf{Other interactions:} 
	These datasets represent the other networks as Chess game interactions (\textit{chess} \cite{Konect13}),
	user loans on Prosper (\textit{perLoan} \cite{Konect13}), hyperlink network of Wikipedia 
	(\textit{wikiHy} \cite{Mislove09,Konect13}), voting in elections on Wikipedia 
	(\textit{wikiEl} \cite{LeskovecHK07,Konect13}) and conflict resolution on Wikipedia 
	(\textit{wikiC} \cite{WikiEd09,Konect13}).
\end{itemize}

In some of these datasets there are some rare instances in which edges are deleted (not present in new snapshot). 
Thus, in order to use these datasets for evaluation of incremental algorithms we ignore the deletion of these edges 
(and hence reinsertion of deleted edges).
Moreover, in several datasets the edges are inserted in form of batches (having same insertion time), 
where the number of batches are significantly lesser than the number of inserted edges. 
Almost all the algorithms (except  FDFS and SDFS3) can be tweaked to handle such batch insertions more efficiently,
updating the DFS tree once after insertion of an entire batch, instead of treating every edge insertion individually.

\begin{table}[!ht]
	{ \scriptsize
	\centering
	\def\arraystretch{1.5}
	\resizebox{\columnwidth}{!}{%
	\begin{tabular}{|l|r|r|r|r|r|r|r|r|r|}
		\hline
		Dataset & $n$ & $m|m^*$ & {\tiny$\frac{m}{n}|\frac{m}{m^*}$} & {ADFS1}& {ADFS2}& {SDFS3}&{SDFS2} & {SDFS}& {WDFS}\\
		\hline		
\textit{ass733} & 7.72K&21.47K&2.78&\textbf{1.00}&{1.88}&{34.12}&{639.50}&{1.13K}&{2.99K}\\
&&721.00&29.77&\textbf{1.00}&{2.71}&{38.43}&{35.57}&{54.14}&{95.43}\\
\textit{intTop} & 34.76K&107.72K&3.10&\textbf{1.00}&{2.14}&{111.32}&{3.78K}&{8.15K}&{14.65K}\\
&&18.27K&5.89&\textbf{1.00}&{6.07}&{99.47}&{320.49}&{1.83K}&{2.24K}\\
\textit{fbFrnd} & 63.73K&817.03K&12.82&\textbf{1.00}&{2.18}&{146.58}&{2.02K}&{14.67K}&{11.75K}\\
&&333.92K&2.45&\textbf{1.00}&{8.10}&{141.07}&{491.24}&{7.63K}&{4.27K}\\
\textit{wikiC} & 116.84K&2.03M&17.36&\textbf{1.00}&{1.82}&{249.45}&{3.09K}&$>${22.56K}&$>${22.56K}\\
&&205.59K&9.86&\textbf{1.00}&{2.26}&{246.49}&{2.69K}&{4.39K}&{3.35K}\\
\textit{arxvTh} & 22.91K&2.44M&106.72&\textbf{1.00}&{1.81}&{28.31}&{3.41K}&$>${39.96K}&{9.72K}\\
&&210.00&11.64K&\textbf{1.00}&{6.74}&{32.01}&{8.63}&{13.24}&{2.84K}\\
\textit{arxvPh} & 28.09K&3.15M&112.07&\textbf{1.00}&{2.38}&{57.94}&{2.54K}&$>${36.29K}&{11.32K}\\
&&2.26K&1.39K&\textbf{1.00}&{8.25}&{70.75}&{103.23}&{192.22}&{3.17K}\\
\textit{dblp} & 1.28M&3.32M&2.59&\textbf{1.00}&{1.60}&$>${22.07K}&$>${22.07K}&$>${22.07K}&$>${22.07K}\\
&&1.72M&1.93&\textbf{1.00}&{1.84}&$>${21.26K}&$>${21.26K}&$>${21.26K}&$>${21.26K}\\
\textit{youTb} & 3.22M&9.38M&2.91&\textbf{1.00}&{3.53}&$>${347.00}&$>${347.00}&$>${347.00}&$>${347.00}\\
&&203.00&46.18K&{1.26}&{2.26}&$>${322.18}&{1.00}&\textbf{1.00}&{260.73}\\
\hline	
	\end{tabular}
}}
	\caption{Comparison of time taken by different algorithms, relative to the fastest (shown in bold), for maintaining
	incremental DFS on real undirected graphs. 
	See Table \ref{tab:real_dataE} for corresponding table comparing the exact performance of different algorithms.\newline
	} 
	\label{tab:real_data}
\end{table}

\begin{table}[!t]
	\centering
	{\scriptsize
	\def\arraystretch{1.5}
	\resizebox{.75\columnwidth}{!}{%
	\begin{tabular}{|l|r|r|r|r|r|r|r|}
		\hline
		Dataset & $n$ & $m|m^*$ & {\small $\frac{m}{n}|\frac{m}{m^*}$} & \multicolumn{1}{c|}{FDFS} & \multicolumn{1}{c|}{SDFS3} & \multicolumn{1}{c|}{SDFS2} & 
		\multicolumn{1}{c|}{SDFS} \\
		\hline
		\textit{dncCoR} & 1.89K&5.52K&2.92&{1.55}&\textbf{1.00}&{2.27}&{9.86}\\
&&4.01K&1.38&{1.55}&\textbf{1.00}&{2.00}&{7.18}\\
\textit{ucIrv} & 1.90K&20.30K&10.69&{1.69}&\textbf{1.00}&{2.25}&{21.81}\\
&&20.12K&1.01&{1.78}&\textbf{1.00}&{2.35}&{22.14}\\
\textit{chess} & 7.30K&60.05K&8.22&{1.94}&\textbf{1.00}&{2.54}&{20.00}\\
&&100.00&600.46&{52.04}&{26.14}&\textbf{1.00}&\textbf{1.00}\\
\textit{diggNw} & 30.40K&85.25K&2.80&\textbf{1.00}&{1.33}&{3.60}&{14.50}\\
&&81.77K&1.04&\textbf{1.00}&{1.38}&{3.78}&{11.96}\\
\textit{asCaida} & 31.30K&97.84K&3.13&\textbf{1.00}&{4.31}&{13.60}&{64.71}\\
&&122.00&801.98&{12.57}&{42.62}&{1.01}&\textbf{1.00}\\
\textit{wikiEl} & 7.12K&103.62K&14.55&{1.01}&\textbf{1.00}&{2.58}&{51.80}\\
&&97.98K&1.06&{1.00}&\textbf{1.00}&{2.53}&{52.38}\\
\textit{slashDt} & 51.08K&130.37K&2.55&{1.03}&\textbf{1.00}&{2.78}&{5.85}\\
&&84.33K&1.55&{1.04}&\textbf{1.00}&{2.07}&{3.79}\\
\textit{lnKMsg} & 27.93K&237.13K&8.49&{1.82}&\textbf{1.00}&{2.40}&{23.24}\\
&&217.99K&1.09&{1.77}&\textbf{1.00}&{2.30}&{23.13}\\
\textit{fbWall} & 46.95K&264.00K&5.62&{1.29}&\textbf{1.00}&{2.49}&{14.84}\\
&&263.12K&1.00&{1.31}&\textbf{1.00}&{2.73}&{17.11}\\
\textit{enron} & 87.27K&320.15K&3.67&\textbf{1.00}&{1.55}&{5.66}&{67.58}\\
&&73.87K&4.33&\textbf{1.00}&{1.48}&{2.61}&{14.00}\\
\textit{gnutella} & 62.59K&501.75K&8.02&{1.23}&\textbf{1.00}&{2.54}&{19.13}\\
&&9.00&55.75K&{1.17K}&{1.04K}&{1.03}&\textbf{1.00}\\
\textit{epinion} & 131.83K&840.80K&6.38&{1.32}&\textbf{1.00}&{2.29}&{17.77}\\
&&939.00&895.42&{95.27}&{93.62}&\textbf{1.00}&{1.00}\\
\textit{digg} & 279.63K&1.73M&6.19&\textbf{1.00}&{1.18}&{3.96}&$>${29.28}\\
&&1.64M&1.05&\textbf{1.00}&{1.34}&{4.08}&$>${30.92}\\
\textit{perLoan} & 89.27K&3.33M&37.31&\textbf{1.00}&{7.10}&{30.70}&$>${639.03}\\
&&1.26K&2.65K&{2.13}&{13.18}&\textbf{1.00}&{1.01}\\
\textit{flickr} & 2.30M&33.14M&14.39&-&-&-&-\\
&&134.00&247.31K&$>${476.50}&$>${476.50}&{1.01}&\textbf{1.00}\\
\textit{wikiHy} & 1.87M&39.95M&21.36&-&-&-&-\\
&&2.20K&18.18K&$>${69.26}&$>${69.26}&\textbf{1.00}&{1.13}\\
\hline
		\end{tabular}
	}}
	\caption{Comparison of time taken by different algorithms, relative to the fastest (shown in bold),
		for maintaining incremental DFS on real directed graphs.	
	If all algorithms exceed 100hrs giving no fastest algorithm, their corresponding relative time is not shown (-). 
		See Table~\ref{tab:real_data2E} for corresponding table comparing the exact performance of different algorithms.}
	\label{tab:real_data2}
\end{table}

\subsection{Evaluation}
The comparison of the performance of the existing and the proposed algorithms for real undirected graphs 
and real directed graphs is shown in Table \ref{tab:real_data} and Table \ref{tab:real_data2} respectively.
To highlight the relative performance of different algorithms, we present the time taken by them 
relative to that of the fastest algorithm (see Appendix \ref{appn:real_exact} for the exact time and memory used 
by different algorithms). 
In case the time exceeded $100$hrs 
the process was terminated, and we show the relative time 
in the table with a '$>$' sign and the ratio corresponding to $100$hrs. 
For each dataset, the first row corresponds to the experiments in which the inserted edges are processed one by one, and the
second row corresponds to the experiments in which the inserted edges are processed in batches 
($m^*$ denotes the corresponding number of batches).
The density of a graph can be judged by comparing the  average degree ($m/n$) with the number of vertices ($n$).
Similarly, the batch density of a graph can be judged by comparing the average size of a batch ($m/m^*$) with 
the number of vertices ($n$).

For undirected graphs, Table \ref{tab:real_data} clearly shows that ADFS1 outperforms 
all the other algorithms irrespective of whether the edges are processed one by one or in batches (except \textit{youTb}). 
Moreover, despite ADFS2 having better worst case bounds than ADFS1, the overhead of 
maintaining its data structure $\cal D$ leads to inferior performance as compared to ADFS1. 
Also, SDFS2 significantly improves over SDFS ($>2$ times). 
However, by adding a simple heuristic, SDFS3 improves over SDFS2 by a huge margin ($>10$ times)
which becomes even more significant when the graph is very dense (\textit{arxvT} and \textit{arxvPh}). 
Also, note that even SDFS3 performs a lot worse than ADFS ($>30$ times) despite having a profound 
improvement over SDFS2. Further, despite having good worst case bounds, WDFS seems to be only of theoretical 
interest and performs worse than even SDFS in general.
However, if the graph is significantly dense (\textit{fbFrnd}, \textit{wikiC}, \textit{arxvT} and \textit{arxvPh}),
WDFS performs better than SDFS but still far worse than SDFS2.
Now, in case of batch updates, SDFS3 is the only algorithm that is unable to 
exploit the insertion of edges in batches.
Hence, SDFS3 performs worse than SDFS2 and even SDFS if the batch density is significantly high 
(\textit{arxvTh} and \textit{youTb}). Finally, if the batch density is extremely high 
(\textit{youTb}), 
the simplicity of SDFS and SDFS2 results in a much better performance than even ADFS.\\

{\centering
	\fbox{\parbox{\linewidth}{\textbf{Observations:}
For real undirected graphs
\begin{itemize}
\item ADFS outperforms all other algorithms by a huge margin, with ADFS1 always performing
mildly better than ADFS2.
\item SDFS2 mildly improves SDFS, whereas SDFS3 significantly improves SDFS2.
\item WDFS performs even worse than SDFS for sparse graphs.
\end{itemize}
		}}
}
%

For directed graphs, Table \ref{tab:real_data2} shows that both FDFS and SDFS3 perform almost equally well 
(except \textit{perLoan}) and outperform all other algorithms when the edges are processed one by one. 
In general SDFS3 outperforms FDFS marginally when the graph is dense (except \textit{slashDt} and \textit{perLoan}).
The significance of SDFS3 is further highlighted by the fact that it is much simpler to implement 
as compared to FDFS.  
Again, SDFS2 significantly improves over SDFS ($>2$ times). 
Further, by adding a simple heuristic, SDFS3 improves over SDFS2 ($>2$ times), 
and this improvement becomes more pronounced when the graph is very dense (\textit{perLoan}). 
Now, in case of batch updates, both FDFS 
and SDFS3 are unable to exploit the insertion of edges in batches.
Hence, they perform worse than SDFS and SDFS2 for batch updates, 
if the average size of a batch is at least $600$. SDFS and SDFS2 perform almost equally well 
in such cases with SDFS performing marginally better than SDFS2 when the batch density is significantly high 
(\textit{asCaida}, \textit{gnutella} and \textit{flickr}).\\


{\centering
	\fbox{\parbox{\linewidth}{\textbf{Observations:}
For real directed graphs
\begin{itemize}
\item FDFS and SDFS3 outperform all other algorithms unless batch density is high,
where trivial SDFS is better.
\item SDFS3 performs better than FDFS in dense graphs.
\item SDFS2 mildly improves SDFS, and SDFS3 mildly improves SDFS2.
\end{itemize}
		}}
}

Overall, we propose the use of ADFS1 and SDFS3 for undirected and directed graphs respectively. 
Although SDFS3 performs very well on real graphs, its worst case time complexity is no better 
than that of SDFS on general graphs (see Appendix \ref{appn:wc_SDFS3}). 
Finally, in case the batch density of the input graph is substantially high, we can simply use the trivial SDFS algorithm.\\

\noindent
\textbf{Remark:} 
The improvement of SDFS3 over SDFS2 is substantially better on undirected graphs than on directed graphs.
Even then ADFS1 outperforms SDFS3 by a huge margin. Also, when the batch density is extremely high (\textit{youTb}), 
ADFS1 performs only mildly slower than the fastest algorithm (SDFS). These observations further highlight the 
significance of ADFS1 in practice. 

\section{Conclusions}
\label{sec:conclusion}
Our experimental study of  existing algorithms for incremental DFS on random graphs
presented some interesting inferences. Upon further investigation, we discovered an 
important property of the structure of DFS tree in random graphs: 
the broomstick structure. We then theoretically proved the variation in length of the 
{\em stick} of the DFS tree as the graph density increases, which also exactly matched the 
experimental results. This led to several interesting applications, including the design 
of an extremely simple algorithm SDFS2. 
This algorithm theoretically matches and experimentally outperforms the 
state-of-the-art algorithm in dense random graphs. 
It can 
also be used as a single pass semi-streaming algorithm for incremental DFS as well as 
strong connectivity in random graphs, which also establishes the difference in hardness 
of strong connectivity in general graphs and random graphs. Finally, for real world graphs, 
which are usually sparse, we propose a new simple algorithm SDFS3 which performs much 
better than SDFS2. Despite being extremely simple, it almost always matches the performance 
of FDFS in directed graphs. However, for undirected graphs ADFS was found to outperform
all algorithms (including SDFS3) by a huge margin motivating its use in practice.


For future research directions, recall that ADFS (see Inference $I_2$) performs extremely well even on 
sparse random graphs. Similarly, the performance of FDFS and SDFS3 is also very good even on sparse random graphs.
However, none of these have asymptotic bounds any better than $\tilde{O}(n^2)$. 
After preliminary investigation, we believe that the asymptotic bounds for ADFS and FDFS (in DAGs) 
should be $O(m+n\text{polylog} n)$ for random graphs.
Also, we believe that the asymptotic bounds for SDFS3 and FDFS (in directed graphs) 
should be $O(m+n^{4/3}\text{polylog} n)$ for random graphs.
It would be interesting to see if it is possible to prove these bounds theoretically.

%

\bibliography{paper.bib}

\begin{thebibliography}{10}

\bibitem{AlbertsCI97}
David Alberts, Giuseppe Cattaneo, and Giuseppe~F. Italiano.
\newblock An empirical study of dynamic graph algorithms.
\newblock {\em {ACM} Journal of Experimental Algorithmics}, 2:5, 1997.

\bibitem{AlimontiLM96}
Paola Alimonti, Stefano Leonardi, and Alberto Marchetti{-}Spaccamela.
\newblock Average case analysis of fully dynamic reachability for directed
  graphs.
\newblock {\em {ITA}}, 30(4):305--318, 1996.

\bibitem{BastMST06}
H.~Bast, Kurt Mehlhorn, Guido Sch{\"{a}}fer, and Hisao Tamaki.
\newblock Matching algorithms are fast in sparse random graphs.
\newblock {\em Theory Comput. Syst.}, 39(1):3--14, 2006.

\bibitem{BaswanaCCK16}
Surender Baswana, Shreejit~Ray Chaudhury, Keerti Choudhary, and Shahbaz Khan.
\newblock Dynamic {DFS} in undirected graphs: breaking the \emph{O(m)} barrier.
\newblock In {\em {SODA}}, pages 730--739, 2016.

\bibitem{BaswanaC15}
Surender Baswana and Keerti Choudhary.
\newblock On dynamic {DFS} tree in directed graphs.
\newblock In {\em MFCS, Proceedings, Part {II}}, pages 102--114, 2015.

\bibitem{BaswanaK14}
Surender Baswana and Shahbaz Khan.
\newblock Incremental algorithm for maintaining {DFS} tree for undirected
  graphs.
\newblock In {\em ICALP, Proceedings, Part {I}}, pages 138--149, 2014.

\bibitem{Bollobas84}
Béla Bollobás.
\newblock The evolution of random graphs.
\newblock {\em Transactions of the American Mathematical Society}, 286
  (1):257--274, 1984.

\bibitem{BorradaileMM14}
Glencora Borradaile, Claire Mathieu, and Theresa Migler.
\newblock Lower bounds for testing digraph connectivity with one-pass streaming
  algorithms.
\newblock {\em CoRR}, abs/1404.1323, 2014.

\bibitem{WikiEd09}
Ulrik Brandes, Patrick Kenis, Jürgen Lerner, and Denise van Raaij.
\newblock Computing wikipedia edit networks.
\newblock 2009.

\bibitem{CattaneoFPI10}
Giuseppe Cattaneo, Pompeo Faruolo, Umberto~Ferraro Petrillo, and Giuseppe~F.
  Italiano.
\newblock Maintaining dynamic minimum spanning trees: An experimental study.
\newblock {\em Discrete Applied Mathematics}, 158(5):404--425, 2010.

\bibitem{Choudhury09}
Munmun~De Choudhury, Hari Sundaram, Ajita John, and Dorée~Duncan Seligmann.
\newblock Social synchrony: Predicting mimicry of user actions in online social
  media.
\newblock In {\em Proc. Int. Conf. on Computational Science and Engineering},
  pages 151--158, 2009.

\bibitem{CordeiroMPTVW10}
Daniel Cordeiro, Gr{\'{e}}gory Mouni{\'{e}}, Swann Perarnau, Denis Trystram,
  Jean{-}Marc Vincent, and Fr{\'{e}}d{\'{e}}ric Wagner.
\newblock Random graph generation for scheduling simulations.
\newblock In {\em 3rd International Conference on Simulation Tools and
  Techniques, SIMUTools '10, Malaga, Spain - March 16 - 18, 2010}, page~60,
  2010.

\bibitem{bigDND14}
Erik Demaine and Mohammad~Taghi Hajiaghayi.
\newblock Bigdnd: Big dynamic network data.
\newblock \url{http://projects.csail.mit.edu/dnd/}, 2014.

\bibitem{DemetrescuI06}
Camil Demetrescu and Giuseppe~F. Italiano.
\newblock Experimental analysis of dynamic all pairs shortest path algorithms.
\newblock {\em {ACM} Trans. Algorithms}, 2(4):578--601, 2006.

\bibitem{ErdosR59}
Paul Erd\H{o}s and Alfr\'{e}d R\'{e}nyi.
\newblock {O}n random graphs {I}.
\newblock {\em Publicationes Mathematicae (Debrecen)}, 6:290--297, 1959.

\bibitem{ErdosR60}
P.~Erd{\H o}s and A~R{\' e}nyi.
\newblock On the evolution of random graphs.
\newblock In {\em Publication of the Mathematical Institute of the Hungarian
  Academy of Sciences}, pages 17--61, 1960.

\bibitem{Farach-ColtonHL15}
Martin Farach{-}Colton, Tsan{-}sheng Hsu, Meng Li, and Meng{-}Tsung Tsai.
\newblock Finding articulation points of large graphs in linear time.
\newblock In {\em Algorithms and Data Structures, WADS}, pages 363--372, 2015.

\bibitem{FranciosaGN97}
Paolo~Giulio Franciosa, Giorgio Gambosi, and Umberto Nanni.
\newblock The incremental maintenance of a depth-first-search tree in directed
  acyclic graphs.
\newblock {\em Inf. Process. Lett.}, 61(2):113--120, 1997.

\bibitem{FriezeK15}
A.~Frieze and M.~Karo{\'n}ski.
\newblock {\em Introduction to Random Graphs}.
\newblock Cambridge University Press, 2015.

\bibitem{GomezKV08}
Vicenç G\'{o}mez, Andreas Kaltenbrunner, and Vicente L\'{o}pez.
\newblock Statistical analysis of the social network and discussion threads in
  {Slashdot}.
\newblock In {\em Proc. Int. World Wide Web Conf.}, pages 645--654, 2008.

\bibitem{GorkeMSSW13}
Robert G{\"{o}}rke, Pascal Maillard, Andrea Schumm, Christian Staudt, and
  Dorothea Wagner.
\newblock Dynamic graph clustering combining modularity and smoothness.
\newblock {\em {ACM} Journal of Experimental Algorithmics}, 18, 2013.

\bibitem{HoggL12}
T.~Hogg and K.~Lerman.
\newblock Social dynamics of {Digg}.
\newblock {\em EPJ Data Science}, 1(5), 2012.

\bibitem{HopcroftK73}
John~E. Hopcroft and Richard~M. Karp.
\newblock An $n^{5/2}$ algorithm for maximum matchings in bipartite graphs.
\newblock {\em {SIAM} J. Comput.}, 2(4):225--231, 1973.

\bibitem{HopcroftT74}
John~E. Hopcroft and Robert~Endre Tarjan.
\newblock Efficient planarity testing.
\newblock {\em J. {ACM}}, 21(4):549--568, 1974.

\bibitem{IyerKRT01}
Raj Iyer, David~R. Karger, Hariharan Rahul, and Mikkel Thorup.
\newblock An experimental study of polylogarithmic, fully dynamic, connectivity
  algorithms.
\newblock {\em {ACM} Journal of Experimental Algorithmics}, 6:4, 2001.

\bibitem{Janthong14}
Abhabongse Janthong.
\newblock Streaming algorithm for determining a topological ordering of a
  digraph.
\newblock {\em UG Thesis, Brown University}, 2014.

\bibitem{Kapidakis90}
S.~Kapidakis.
\newblock Average-case analysis of graph-searching algorithms.
\newblock {\em PhD Thesis, Princeton University}, no. 286, 1990.

\bibitem{Klimt04}
Bryan Klimt and Yiming Yang.
\newblock The {Enron} corpus: A new dataset for email classification research.
\newblock In {\em Proc. European Conf. on Machine Learning}, pages 217--226,
  2004.

\bibitem{Konect13}
J{\'{e}}r{\^{o}}me Kunegis.
\newblock Konect - the koblenz network collection.
\newblock \url{http://konect.uni-koblenz.de/networks/}, October 2016.

\bibitem{LeskovecHK07}
Jure Leskovec, Daniel Huttenlocher, and Jon Kleinberg.
\newblock Governance in social media: A case study of the {Wikipedia} promotion
  process.
\newblock In {\em Proc. Int. Conf. on Weblogs and Social Media}, 2010.

\bibitem{LeskovecKF07}
Jure Leskovec, Jon Kleinberg, and Christos Faloutsos.
\newblock Graph evolution: Densification and shrinking diameters.
\newblock {\em ACM Trans. Knowledge Discovery from Data}, 1(1):1--40, 2007.

\bibitem{LeskovecKF05}
Jure Leskovec, Jon~M. Kleinberg, and Christos Faloutsos.
\newblock Graphs over time: densification laws, shrinking diameters and
  possible explanations.
\newblock In {\em Proceedings of the Eleventh {ACM} {SIGKDD} International
  Conference on Knowledge Discovery and Data Mining, Chicago, Illinois, USA,
  August 21-24, 2005}, pages 177--187, 2005.

\bibitem{snapDATA}
Jure Leskovec and Andrej Krevl.
\newblock {SNAP Datasets}: {Stanford} large network dataset collection.
\newblock \url{http://snap.stanford.edu/data/}, June 2014.

\bibitem{DBLP02}
Michael Ley.
\newblock The {DBLP} computer science bibliography: Evolution, research issues,
  perspectives.
\newblock In {\em Proc. Int. Symposium on String Processing and Information
  Retrieval}, pages 1--10, 2002.

\bibitem{Massa05}
Paolo Massa and Paolo Avesani.
\newblock Controversial users demand local trust metrics: an experimental study
  on {epinions.com} community.
\newblock In {\em Proc. American Association for Artificial Intelligence
  Conf.}, pages 121--126, 2005.

\bibitem{Melancon06}
Guy Melancon.
\newblock Just how dense are dense graphs in the real world?: A methodological
  note.
\newblock In {\em Proceedings of the 2006 AVI Workshop on BEyond Time and
  Errors: Novel Evaluation Methods for Information Visualization}, BELIV '06,
  pages 1--7, 2006.

\bibitem{Meyer01a}
Ulrich Meyer.
\newblock Single-source shortest-paths on arbitrary directed graphs in linear
  average-case time.
\newblock In {\em Proceedings of the Twelfth Annual Symposium on Discrete
  Algorithms, January 7-9, 2001, Washington, DC, {USA.}}, pages 797--806, 2001.

\bibitem{MicaliV80}
Silvio Micali and Vijay~V. Vazirani.
\newblock An o(sqrt({\(\vert\)}v{\(\vert\)}) {\(\vert\)}e{\(\vert\)}) algorithm
  for finding maximum matching in general graphs.
\newblock In {\em 21st Annual Symposium on Foundations of Computer Science,
  Syracuse, New York, USA, 13-15 October 1980}, pages 17--27, 1980.

\bibitem{Mislove09}
Alan Mislove.
\newblock {\em Online Social Networks: Measurement, Analysis, and Applications
  to Distributed Information Systems}.
\newblock PhD thesis, Rice University, Department of Computer Science, May
  2009.

\bibitem{MisloveKGDB08}
Alan Mislove, Hema~Swetha Koppula, Krishna~P. Gummadi, Peter Druschel, and
  Bobby Bhattacharjee.
\newblock Growth of the flickr social network.
\newblock In {\em Proceedings of the 1st ACM SIGCOMM Workshop on Social
  Networks (WOSN'08)}, August 2008.

\bibitem{Motwani94}
Rajeev Motwani.
\newblock Average-case analysis of algorithms for matchings and related
  problems.
\newblock {\em J. {ACM}}, 41(6):1329--1356, 1994.

\bibitem{ConnellC09}
Thomas~C. O'Connell.
\newblock A survey of graph algorithms under extended streaming models of
  computation.
\newblock In {\em Fundamental Problems in Computing: Essays in Honor of
  Professor Daniel J. Rosenkrantz}, pages 455--476, 2009.

\bibitem{Opsahl09}
Tore Opsahl and Pietro Panzarasa.
\newblock Clustering in weighted networks.
\newblock {\em Social Networks}, 31(2):155--163, 2009.

\bibitem{RibeiroT07}
Celso~C. Ribeiro and Rodrigo~F. Toso.
\newblock Experimental analysis of algorithms for updating minimum spanning
  trees on graphs subject to changes on edge weights.
\newblock In {\em Experimental Algorithms, 6th International Workshop, {WEA}
  2007, Rome, Italy, June 6-8, 2007, Proceedings}, pages 393--405, 2007.

\bibitem{RipeanuIF02}
Matei Ripeanu, Adriana Iamnitchi, and Ian~T. Foster.
\newblock Mapping the gnutella network.
\newblock {\em {IEEE} Internet Computing}, 6(1):50--57, 2002.

\bibitem{Ruhl03}
Jan~Matthias Ruhl.
\newblock {\em Efficient Algorithms for New Computational Models}.
\newblock PhD thesis, Department of Computer Science, MIT, Cambridge, MA, 2003.

\bibitem{SchultesS07}
Dominik Schultes and Peter Sanders.
\newblock Dynamic highway-node routing.
\newblock In {\em Experimental Algorithms, 6th International Workshop, {WEA}
  2007, Rome, Italy, June 6-8, 2007, Proceedings}, pages 66--79, 2007.

\bibitem{Sibeyn01}
J.F. Sibeyn.
\newblock {\em Depth First Search on Random Graphs}, volume~6 of {\em Report
  -}.
\newblock Department of Computing Science, Umeå University, 2001.

\bibitem{Tarjan72}
Robert~Endre Tarjan.
\newblock Depth-first search and linear graph algorithms.
\newblock {\em SIAM J. Comput.}, 1(2):146--160, 1972.

\bibitem{Tarjan74}
Robert~Endre Tarjan.
\newblock Finding dominators in directed graphs.
\newblock {\em {SIAM} J. Comput.}, 3(1):62--89, 1974.

\bibitem{Tarjan97}
Robert~Endre Tarjan.
\newblock Dynamic trees as search trees via euler tours, applied to the network
  simplex algorithm.
\newblock {\em Mathematical Programming}, 78:169--177, 1997.

\bibitem{ViswanathMCG09}
Bimal Viswanath, Alan Mislove, Meeyoung Cha, and Krishna~P. Gummadi.
\newblock On the evolution of user interaction in facebook.
\newblock In {\em Proceedings of the 2nd ACM SIGCOMM Workshop on Social
  Networks (WOSN'09)}, August 2009.

\bibitem{WestbrookT92}
Jeffery Westbrook and Robert~Endre Tarjan.
\newblock Maintaining bridge-connected and biconnected components on-line.
\newblock {\em Algorithmica}, 7(5{\&}6):433--464, 1992.

\bibitem{ZhangLMZ05}
Beichuan Zhang, Raymond Liu, Daniel Massey, and Lixia Zhang.
\newblock Collecting the internet as-level topology.
\newblock {\em SIGCOMM Computer Communication Review}, 35(1):53--61, 2005.

\end{thebibliography}

\appendix

\section{Performance analysis in terms of edges}
\label{appn:perf_edge}

\begin{figure}[h]
\centering
\includegraphics[width=0.55\linewidth]{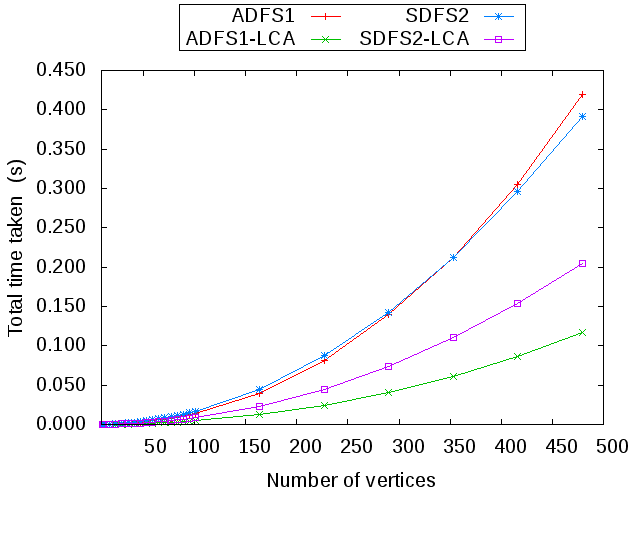}
\caption{Comparison of total time taken and time taken by LCA/LA data structure 
	by the most efficient algorithms for insertion of $m={n \choose 2}$ edges 
	for different values of $n$. 
}
\label{fig:incDFSvarN_TE}
\end{figure}

Most of the algorithms analyzed in this paper require dynamic maintenance of a 
data structure for answering LCA and LA (level ancestor) queries. 
The LCA/LA data structures used by Baswana and Khan \cite{BaswanaK14} takes $O(1)$ amortized time
to maintain the data structure for every vertex whose ancestor is changed in the DFS tree.
However, it is quite difficult to implement and seems to be more of theoretical interest.
Thus, we use a far simpler data structure whose maintenance require $O(\log n)$ time 
for every vertex whose ancestor is changed  in the DFS tree. Figure \ref{fig:incDFSvarN_TE} shows that the time taken by 
these data structures is insignificant in comparison to the total time taken by the algorithm.
Analyzing the number of edges processed instead of time taken allows us to ignore the time taken for maintaining 
and querying this LCA/LA data structure.
Moreover, the performance of ADFS and FDFS is directly proportional to the number of edges processed
along with some vertex updates (updating DFN numbers for FDFS and LCA/LA structure for ADFS).
However, the tasks related to vertex updates can be performed in $\tilde{O}(1)$ time using dynamic trees \cite{Tarjan97}.
Thus, the actual performance of these algorithms is truly depicted by the number of edges processed, justifying
our evaluation of relative performance of different algorithms by comparing the number of edges processed.
\section{Worst Case Input for ADFS1}
\label{appn:wc_ADFS}

\begin{figure*}[!h]
	\centering
	\includegraphics[width=\linewidth]{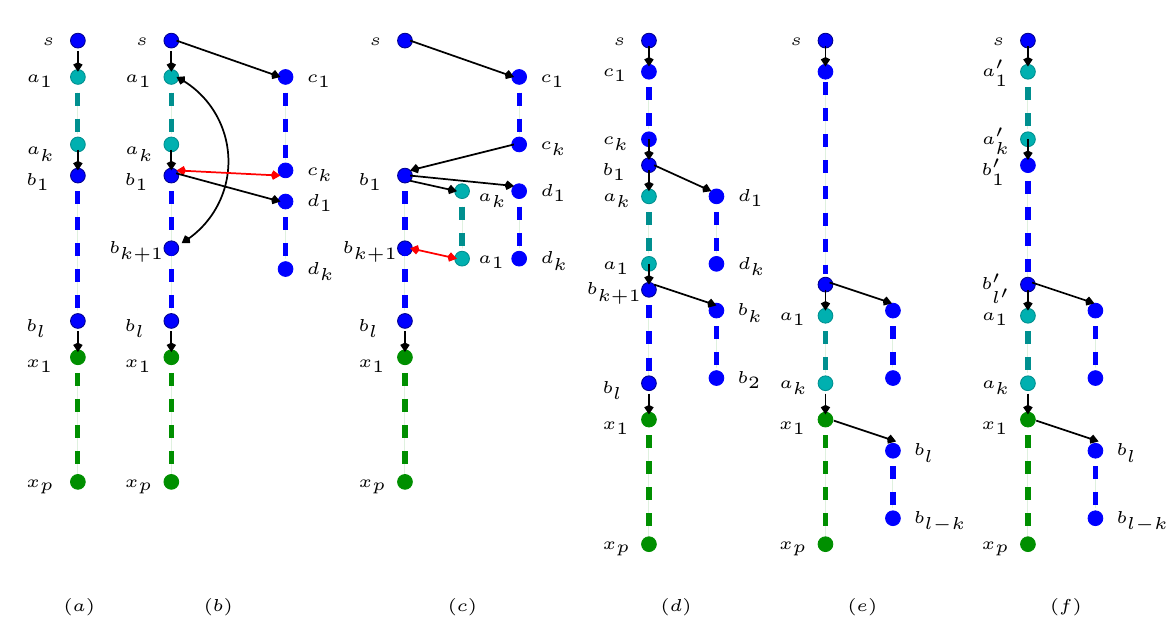}
	\caption{Example to demonstrate the tightness of the ADFS1.
		(a) Beginning of a phase with vertex sets $A$, $B$ and $X$.
		(b) Phase begins with addition of two vertex sets $C$ and $D$. 
		The first stage begins with the insertion of a back edge $(a_1,b_{k+1})$ and a cross edge $(b_1,c_k)$.
		(c) The rerooted subtree with the edges in $A\times X$ and $(b_{k+1},a_1)$  as cross edges.
		(d) Final DFS tree after the first stage.
		(e) Final DFS tree after first phase.
		(f) New vertex sets $A'$, $B'$ and $X$ for the next phase.}
	\label{fig:WorstCase1}
\end{figure*}

We now describe a sequence of $\Theta(m)$ edge insertions for which ADFS1 takes $\Theta(n^{3/2}m^{1/2})$ time. 
Consider a graph $G=(V,E)$ where the set of vertices $V$ is divided into two sets $V'$ and $I$, 
each of size $\Theta(n)$. The vertices in $V'$ are connected in the form of a chain (see Figure \ref{fig:WorstCase1} (a)) 
and the vertices in $I$ are isolated vertices. Thus, it is sufficient to describe only the maintenance 
of DFS tree for the vertices in set $V'$, as the vertices in $I$ will exist as isolated vertices 
connected to the dummy vertex $s$ in the DFS tree (recall that $s$ is the root of the DFS tree).

We divide the sequence of edge insertions into $n_p$ {\em phases}, 
where each phase is further divided into $n_s$ {\em stages}. 
At the beginning of each {\em phase}, we identify three vertex sets from the set $V'$, namely 
$A= \{a_1,...,a_k\}$, $B=\{b_1,...,b_l\}$ and $X=\{x_1,...,x_p\}$, 
where $k,l,p\leq n$ are integers whose values will be fixed later. 
The value of $l$ is $p$ in the first phase and decreases by $k$ in each subsequent phase. 
Figure \ref{fig:WorstCase1} (a) shows the DFS tree of the initial graph. 
We add $p*k$ edges of the set $A\times X$ to the graph. 
Clearly, the DFS tree does not change since all the inserted edges are back edges.
Further, we extract two sets of vertices $C=\{c_1,...,c_k\}$ and $D=\{d_1,...,d_k\}$ 
from $I$ and connect them in the form of a chain as shown in Figure \ref{fig:WorstCase1} (b).

Now, the {\em first stage} of each phase starts with the addition of the back edge $(a_1,b_{k+1})$ followed
by the cross edge $(b_1,c_k)$. As a result ADFS1 will reroot the tree $T(a_1)$ as shown in the 
Figure \ref{fig:WorstCase1} (c) (extra $O(1)$ vertices from $I$ are added to $C$ to ensure the fall of 
$T(a_1)$ instead of $T(c_1)$). This rerooting makes $(b_{k+1},a_1)$ a cross edge. Moreover,
all $p*k$ edges of set $A\times X$ also become cross edges. 
ADFS1 will collect all these edges and process them in $\Omega(p*k)$ time. 
Since the algorithm can process these edges arbitrarily, it can 
choose to process $(b_{k+1},a_1)$ first. It will result in the final DFS tree as shown in Figure \ref{fig:WorstCase1} (d),
converting all the cross edges in $A\times X$ to back edges and bringing an end to the first stage. 
Note the similarity between Figure \ref{fig:WorstCase1} (b) and Figure \ref{fig:WorstCase1} (d):
the set $C$ is replaced by the set $D$, and the set $D$ is replaced by the top $k$ vertices of $B$.
Hence, in the next stage the same rerooting event can be repeated by adding the edges $(a_k,b_{2k+1})$ and 
$(b_{k+1}, d_k)$, and so on for the subsequent stages. Now, in every stage the length of $B$
decreases by $k$ vertices. Hence, the stage can be repeated $n_s=O(l/k)$ times in the phase,
till $A$ reaches next to $X$ as shown in Figure \ref{fig:WorstCase1} (e).
This completes the first phase. 
Now, in the next phase, first $k$ vertices of the new tree forms the set $A'$ followed by the vertices of $B'$ 
leading up to the previous $A$ as shown in Figure \ref{fig:WorstCase1} (f). 
Since the initial size of $I$ is $\Theta(n)$ and the initial size of $B$ is $l<n$, 
this process can continue for $n_p=O(l/k)$ phases. 
Hence, each phase reduces the size of $I$ as well as $B$ by $k$ vertices.

Hence, at the beginning  of each {\em phase}, we extract $2*k$ isolated vertices from $I$ 
and add $p*k$ edges to the graph. In each {\em stage}, we extract $O(1)$ vertices from $I$ and 
add just 2 edges to the graph in such a way that will force ADFS1 
to process $p*k$ edges to update the DFS tree. Thus, the total number of edges added to the graph is $p*k*n_p$ and the total 
time taken by ADFS1 is $p*k*n_p*n_s$, where $n_p=O(l/k)$ and  $n_s=O(l/k)$. 
Substituting $l=n,p=m/n$ and $k=\sqrt{m/n}$ we get the following theorem for any $n\leq m\leq {n\choose 2}$.

\begin{theorem}
For each value of $n\leq m \leq {n\choose 2}$, there exists a sequence of $m$ edge insertions for which ADFS1 requires $\Theta(n^{3/2}m^{1/2})$ time to maintain the DFS tree.
\label{thm:wc1}
\end{theorem}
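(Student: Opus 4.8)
The construction itself is already fully specified in the discussion preceding the statement, so the plan is to turn the informal description of phases and stages into a rigorous accounting of the work ADFS1 is forced to do. First I would isolate a \emph{stage invariant}: a precise description of the DFS tree restricted to the active vertices $A \cup B \cup C \cup D \cup X$ at the beginning of a stage. The invariant should assert that $X$ sits above the active region, that every edge of $A \times X$ is currently a back edge, and that the chain below $A$ has exactly the shape shown in Figure~\ref{fig:WorstCase1}(b). I would then verify, by appealing to the rerooting semantics of ADFS1 recalled in Section~\ref{sec:exist_algos}, that inserting the back edge $(a_1,b_{k+1})$ followed by the cross edge $(b_1,c_k)$ triggers a rerooting of $T(a_1)$ that hangs it from the new edge, and that the accompanying path reversal simultaneously converts all $pk$ edges of $A \times X$ (together with $(b_{k+1},a_1)$) into cross edges.

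The central per-stage claim is a lower bound: since ADFS1 collects every cross edge created by the reversal before reinserting them, it must touch all $\Omega(pk)$ edges of $A\times X$, costing $\Omega(pk)$ time. Here the crucial freedom is that ADFS1 processes its collected cross edges in an \emph{arbitrary} order; I would exploit this adversarially by having it process $(b_{k+1},a_1)$ first, which re-hangs the chain so that every edge of $A \times X$ reverts to a back edge (Figure~\ref{fig:WorstCase1}(d)). Thus the $\Omega(pk)$ work is entirely wasted, and the configuration returns to a copy of the invariant shifted by $k$ vertices along $B$. A few $O(1)$ padding vertices drawn from $I$ are used to break the tie so that $T(a_1)$ rather than $T(c_1)$ is the subtree that falls; I would check that this padding does not disturb the invariant.

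With the invariant established, the counting is routine. Each stage forces $\Omega(pk)$ work and advances $A$ by $k$ vertices through $B$, so a phase contributes $\Theta(l/k)$ stages and $\Theta(p\,l)$ work; summing the arithmetic progression over the $\Theta(l/k)$ phases (in which the usable length shrinks by $k$ each time) gives total work $\Theta(p\,l^2/k)$, while the edges actually inserted number $\Theta(pk)$ per phase, i.e.\ $\Theta(p\,l) = \Theta(m)$ in total. Substituting $l=n$, $p=m/n$ and $k=\sqrt{m/n}$ turns the work bound into $\Theta(n^{3/2}m^{1/2})$ and confirms the insertion count is $\Theta(m)$, yielding the theorem for every $n \le m \le \binom{n}{2}$.

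The main obstacle I anticipate is the faithful tracking of ADFS1's rerooting, namely proving that the path reversal produces \emph{exactly} the cross-edge set $A\times X \cup \{(b_{k+1},a_1)\}$ and no fewer, and that the adversarial processing order I prescribe is genuinely consistent with the algorithm's specification rather than merely plausible. Equally, I must argue that the broomstick shrinkage that drives the near-linear average-case behaviour never sets in here: because the adversary keeps rebuilding a branching region of size $\Theta(l)$, the bristles never collapse, so Corollary~\ref{cor:bristle-size} simply does not apply and the pessimistic $\Omega(pk)$ per-stage cost recurs across all $\Theta((l/k)^2)$ stages.
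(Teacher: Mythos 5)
Your proposal follows the paper's own argument essentially verbatim: the same phase/stage construction, the same adversarial exploitation of ADFS1's freedom to process collected cross edges in arbitrary order (handling $(b_{k+1},a_1)$ first so that all $\Omega(pk)$ cross edges of $A\times X$ revert to back edges and the configuration repeats shifted by $k$ along $B$), and the same accounting of $\Theta(pl^2/k)$ total work against $\Theta(pl)$ insertions with the substitution $l=n$, $p=m/n$, $k=\sqrt{m/n}$. The extras you add --- the explicit stage invariant and the remark that the broomstick/bristle collapse never sets in --- are harmless formalizations of steps the paper leaves implicit, so the proposal is correct and takes the same route.
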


\noindent \textbf{Remark:} The core of this example is the rerooting event occurring in each 
stage that takes $\Theta(m^{3/2}/n^{3/2})$ time. This event is repeated systematically $n_s*n_p$ times 
to force the algorithm to take $\Theta(n^{3/2}m^{1/2})$ time. 
However, this is possible only because ADFS1 processes the collected cross edges arbitrarily: 
processing the cross edge $(b_k,a_1)$ first amongst all the collected cross edges.
Note that, had the algorithm processed any other cross edge first (as in case of ADFS2), 
we would have reached the end of a phase in a single stage. 
The overall time taken by the algorithm for this example would then be just $\Theta(m)$.

\section{Worst Case Input for FDFS}
\label{appn:wc_FDFS}
 
We now describe a sequence of $O(m)$ edge insertions in a directed acyclic graph for which FDFS takes $\Theta(mn)$ time to 
maintain DFS tree.   
Consider a directed acyclic graph $G=(V,E)$ where the set of vertices $V$ is divided into two sets $A=\{a_1,a_2,...,a_{n/2}\}$ 
and $B=\{b_1,b_2,...,b_{n/2}\}$, each of size $n/2$. The vertices in both $A$ and $B$ are connected in the form of a chain 
(see Figure \ref{fig:WorstCase2} (a), which is the DFS tree of the graph). 
Additionally, set of vertices in $B$ are connected using $m-n/2$ edges (avoiding cycles), i.e.
there can exist edges between $b_i$ and $b_j$, where $i<j$. For any $n\leq m\leq {n\choose 2}$, we can add $\Theta(m)$ edges to $B$ as described above.
Now, we add $n/2$ more edges as described below.

 \begin{figure}[!h]
\centering
\includegraphics[width=0.5\linewidth]{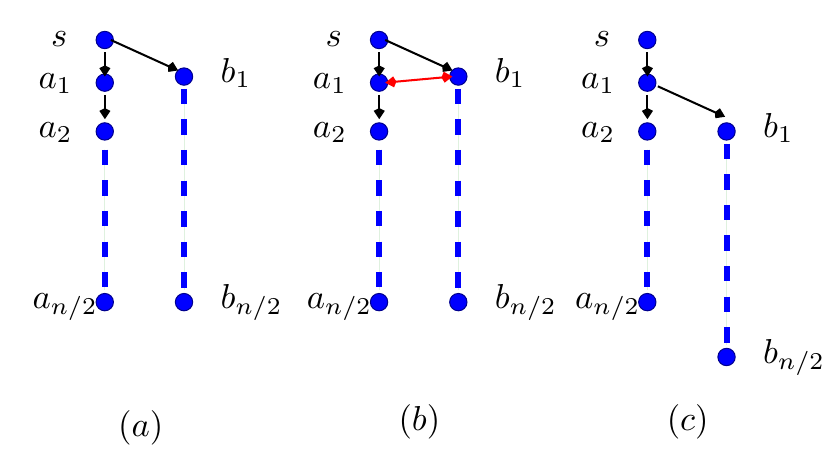}
\caption{Example to demonstrate the tightness of analysis of FDFS.
(a) Initial DFS tree of the graph $G$.
(b) Insertion of a cross edge $(a_1,b_1)$. 
(c) The resultant DFS tree.
}
\label{fig:WorstCase2}
\end{figure}

We first add the edge $(a_1,b_1)$ as shown in Figure \ref{fig:WorstCase2} (b). On addition of an edge $(x,y)$,
FDFS processes all outgoing edges of the vertices having rank $\phi(x)<\phi'\leq \phi(y)$, where $\phi$ is the 
post order numbering of the DFS tree. Clearly, the set of such vertices is the set $B$. Hence, all the 
$\Theta(m)$ edges in $B$ will be processed to form the final DFS tree as shown in Figure \ref{fig:WorstCase1} (c). 
We next add the edge $(a_2,b_1)$ which will again lead to processing of all edges in $B$, and so on. 
This process can be repeated $n/2$ times adding each $(a_i,b_1)$, for $i=1,2,...,n/2$ iteratively. 
Thus, for $n/2$ edge insertions, FDFS processes $\Theta(m)$ edges each, requiring a total of $\Theta(mn)$ time to maintain the DFS tree. 
Hence, overall time required for insertion of $m$ edges is $\Theta(mn)$, as FDFS has a worst case bound of $O(mn)$. 
Thus, we have the following theorem.

\begin{theorem}
For each value of $n\leq m\leq {n\choose 2}$, there exists a sequence of $m$ edge insertions for which FDFS requires $\Theta(mn)$ time to maintain the DFS tree.
\label{thm:wc2}
\end{theorem}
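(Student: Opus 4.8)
The plan is to build one fixed DAG in which a dense, contiguously numbered block of vertices is repeatedly re-scanned by FDFS, and then drive $\Theta(n)$ insertions that each force a full scan of that block. First I would partition $V$ into $A=\{a_1,\dots,a_{n/2}\}$ and $B=\{b_1,\dots,b_{n/2}\}$, lay out each as a directed chain, and orient things so that the initial DFS tree visits the $A$-chain before the $B$-chain, with $a_1$ the child of the root and $b_1$ the topmost vertex of the $B$-chain. Into $B$ I would pack $m-n/2=\Theta(m)$ extra edges $b_i\to b_j$ with $i<j$; since these all point forward along the chain they create no cycle, and for $n\le m\le{n\choose 2}$ there is room for $\Theta(m)$ of them because ${n/2\choose 2}=\Theta(n^2)$. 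A topological order $a_1,\dots,a_{n/2},b_1,\dots,b_{n/2}$ certifies acyclicity throughout, since the extra $B$-edges point forward along the $B$-chain and every later edge I add runs from $A$ into $B$.

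Next I would compute the post-order numbering $\phi$ (the DFN array FDFS maintains) and check that $B$ occupies the top block of ranks, with $b_1$---being the root of its chain---carrying the largest value in $B$ and $a_1$ sitting just below that block. With this in hand, inserting the edge $(a_1,b_1)$ is an anti-cross edge ($\phi(a_1)<\phi(b_1)$), and the candidate set of vertices whose outgoing edges FDFS must examine, namely those of rank strictly between $\phi(a_1)$ and $\phi(b_1)$, is exactly $B$. Hence this single insertion costs $\Theta(m)$, the number of edges packed inside $B$.

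The crux of the argument, and the step I expect to be the main obstacle, is proving the invariant that the configuration regenerates itself: after FDFS re-hangs the re-rooted $B$-subtree under $a_1$ and recomputes the DFN numbers, I must show that $B$ again forms a contiguous top block whose largest rank is $\phi(b_1)$, now sitting immediately above $a_2$. Granting this, the edge $(a_2,b_1)$ is again anti-cross with candidate set exactly $B$, and by induction every insertion $(a_i,b_1)$ for $i=1,\dots,n/2$ forces a full scan of all $\Theta(m)$ edges of $B$. Verifying the invariant requires tracking precisely where the rebuilt subtree is attached and the order in which the partial DFS renumbers vertices, so that each $a_i$ in turn remains just below the $B$-block until it is attached; everything else is routine bookkeeping.

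Finally I would total the cost: the $n/2$ insertions of the edges $(a_i,b_1)$ each incur $\Theta(m)$ work, giving $\Theta(mn)$, while the one-time insertion of the $\Theta(m)$ edges inside $B$ contributes only $\Theta(m)$ and is absorbed. Since the overall sequence has $\Theta(m)$ insertions and FDFS is known to run in $O(mn)$ total time~\cite{FranciosaGN97}, the matching $\Omega(mn)$ lower bound from this construction yields the claimed $\Theta(mn)$ bound and establishes the theorem.
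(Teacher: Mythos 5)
Your proposal is correct and follows essentially the same construction as the paper: the same bipartition into chains $A$ and $B$ with $\Theta(m)$ forward edges packed into $B$, the same insertion sequence $(a_i,b_1)$ for $i=1,\dots,n/2$ each forcing FDFS to scan all of $B$'s edges, and the same final counting combined with the known $O(mn)$ upper bound. The invariant you flag as the crux (that $B$ remains a contiguous top block of post-order ranks with each $a_i$ sitting just below it after every rebuild) is indeed the implicit backbone of the paper's argument, which the paper asserts without spelling out the verification.
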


\section{Worst Case Input for SDFS3}
\label{appn:wc_SDFS3}
 
We now describe a sequence of $m$ edge insertions for which SDFS3 takes $\Theta(m^2)$ time. 
Consider a graph $G=(V,E)$ where the set of vertices $V$ is divided into two sets $V'$ and $I$, 
each of size $\Theta(n)$. The vertices in $V'$ are connected in the form of a three chains (see Figure \ref{fig:WorstCase3} (a)) 
and the vertices in $I$ are isolated vertices. Thus, it is sufficient to describe only the maintenance 
of DFS tree for the vertices in set $V'$, as the vertices in $I$ will exist as isolated vertices 
connected to the dummy vertex $s$ in the DFS tree (recall that $s$ is the root of the DFS tree).

We divide the sequence of edge insertions into $k$ {\em phases}, 
where each phase is further divided into $k$ {\em stages}. 
At the beginning of each {\em phase}, we identify three chains having vertex sets from the set $V'$, namely 
$A= \{a_1,...,a_k\}$, $X=\{x_1,...,x_p\}$ in the first chain, $B=\{b_1,...,b_l\}$ and $Y=\{y_1,...,y_q\}$ in the second chain
and $C= \{c_1,...,c_k\}$, $Z=\{z_1,...,z_r\}$ in the third chain as shown in Figure \ref{fig:WorstCase3} (a).
The constants $k,p,q,r=\Theta(\sqrt{m})$ such that $q>r+k$ and $p \approx q+r+k$.
We then add $e_Z=\Theta(m)$ edges to the set $Z$, $e_y=e_z+k+1$ edges to $Y$ 
and $e_x = e_z+e_y$ edges to $X$, which is overall still $\Theta(m)$.
The size of $A$ and $C$ is $k$ in the first phase and decreases by $1$ in each the subsequent phases. 
Figure \ref{fig:WorstCase3} (a) shows the DFS tree of the initial graph. 

Now, the {\em first stage} of the phase starts with addition of the cross edge $(b_1,c_1)$ as shown in Figure \ref{fig:WorstCase3} (b). 
Clearly, $s$ is the LCA of the inserted edge and SDFS3 would rebuild the smaller of the two subtrees $T(b_1)$ and $T(c_1)$.
Since $q>r$, SDFS3  would hang $T(c_1)$  through edge $(b_1,c_1)$ and perform partial DFS on $T(c_1)$ requiring to process $\Theta(m)$ edges in $Z$.
This completes the first stage with the resultant DFS tree shown in the Figure \ref{fig:WorstCase3} (c).
This process continues for $k$ stages, where in $i^{th}$ stage, $T(c_1)$ would initially hang from $b_{i-1}$ and $(b_i,c_1)$ 
would be inserted. The DFS tree at the end of $k^{th}$ stage is shown in Figure \ref{fig:WorstCase3} (d).
At the end of $k$ stages, every vertex in $B$ is connected to the vertex $c_1$, hence we remove it from $C$ for the next phase.
For this we first add the edge $(a_1,c_1)$. Since both $T(b_1)$ and $T(a_1)$ have approximately same number of vertices
(as $p \approx q+r+k$), we add constant number of vertices (if required) to $Z$ from $I$ to ensure $T(b_1)$ is rebuilt.
The resultant DFS tree is shown in Figure \ref{fig:WorstCase3} (e).
Finally, we add $(a_2,c_1)$. Again both $T(c_1)$ and $T(a_2)$ have approximately same number of vertices, 
so we add constant number of vertices from $I$ to $X$ ensuring $T(a_2)$ is rebuild as shown in Figure \ref{fig:WorstCase3} (f).
Note the similarity between Figures \ref{fig:WorstCase3} (a) and \ref{fig:WorstCase3} (f).
In the next phase, the only difference is that $A'=\{a_2,...,a_k\}$, $C'=\{c_2,...,c_k\}$ and $s'=c_1$.
In each phase one vertex each from $A$ and $C$ are removed and constant number of vertices from $I$ are removed.
Hence the phase can be repeated $k$ times. 

Thus, we have $k$ phases each having $k$ stages. Further, in each stage we add a single cross edge forcing SDFS3 to process 
$\Theta(m)$ edges to rebuild the DFS tree. Thus, the total number of edges added to the graph is $k*k=\Theta(m)$ and the total 
time taken by ADFS1 is $k*k*\Theta(m)= \Theta(m^2)$. Hence, we get the following theorem for any $n\leq m\leq {n\choose 2}$.

\begin{figure*}[!h]
\centering
\includegraphics[width=.9\linewidth]{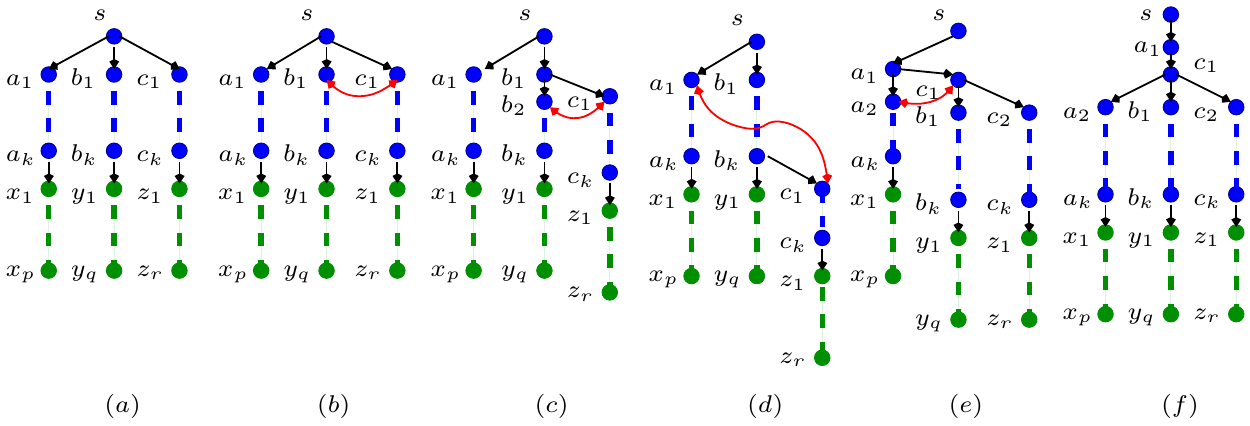}
\caption{Example to demonstrate the tightness of the SDFS3.
(a) Beginning of a phase with vertex sets $A$, $B$ and $X$.
(b) Phase begins with addition of two vertex sets $C$ and $D$. 
The first stage begins by inserting a back edge $(a_1,b_k)$ and a cross edge $(b_1,c_k)$.
(c) The rerooted subtree with the edges in $A\times X$ and $(b_k,a_1)$  as cross edges.
(d) Final DFS tree after the first stage.
(e) Final DFS tree after first phase.
(f) New vertex sets $A'$, $B'$ and $X$ for the next phase.}
\label{fig:WorstCase3}
\end{figure*}

\begin{theorem}
For each value of $n\leq m \leq {n\choose 2}$, there exists a sequence of $m$ edge insertions for which SDFS3 requires $\Theta(m^2)$ time to maintain the DFS tree.
\label{thm:wc3}
\end{theorem}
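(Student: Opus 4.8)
The plan is to exploit the one structural weakness of SDFS3: its rebuild decision is governed by the \emph{number of vertices} in the two candidate subtrees hanging from $LCA(x,y)$, whereas the actual cost of a rebuild is governed by the \emph{number of edges} inside the subtree that is rebuilt. If I can engineer a subtree that sits on very few vertices but carries $\Theta(m)$ internal edges, and then repeatedly force SDFS3 to select this dense subtree for rebuilding, each rebuild will cost $\Theta(m)$ time even though the smaller-subtree heuristic believes it is making the cheap choice. With a budget of $m$ edge insertions, the target is to manufacture $\Theta(m)$ triggering insertions, each provoking a $\Theta(m)$-time rebuild, for a total of $\Theta(m^2)$, matching the trivial bound and showing the heuristic buys nothing in the worst case.

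First I would split the vertices into a working set $V'$ and a reservoir $I$ of isolated vertices, both of size $\Theta(n)$, so that I can silently inject fresh leaves whenever a size comparison needs a constant nudge without perturbing the essential structure (the vertices of $I$ merely hang off the pseudo-root $s$). Inside $V'$ I would lay out a constant number of disjoint chains rooted at $s$: one designated chain carries a dense gadget $Z$ holding $\Theta(m)$ internal edges on only $\Theta(\sqrt m)$ vertices, while the remaining chains supply (i) a sequence of endpoints for the triggering cross edges and (ii) a recycling mechanism that re-hangs the gadget between rounds. The gadget's $\Theta(m)$ edges are inserted once, consuming $\Theta(m)$ of the budget; thereafter no internal edge of the gadget is ever added again, yet every rebuild reprocesses all of them.

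Next I would organise the triggering insertions into $\Theta(\sqrt m)$ phases, each with $\Theta(\sqrt m)$ stages, so the number of triggers is $k^2=\Theta(m)$. In each stage I insert a single cross edge whose LCA is $s$, so the two candidate subtrees are entire chains; by arranging the vertex counts so that the chain containing $Z$ is strictly the smaller of the two (padding the competitor from $I$ when necessary), SDFS3 is forced to rebuild $Z$ and hence process its $\Theta(m)$ edges. At each phase boundary I would use an auxiliary chain to re-root the gadget and restore a configuration identical, up to the removal of one vertex from each controlling chain and a constant number of vertices from $I$, to the start of the phase, so that the same $\Theta(m)$-cost rebuild can be provoked afresh. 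Counting is then immediate: $\Theta(\sqrt m)\times\Theta(\sqrt m)=\Theta(m)$ triggers, each costing $\Theta(m)$, give total time $\Theta(m^2)$, while the inserted edges ($\Theta(m)$ cross edges plus the one-time $\Theta(m)$ gadget edges) number $\Theta(m)$.

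The main obstacle will be the bookkeeping of subtree sizes. I must show that at \emph{every one} of the $\Theta(m)$ stages the gadget-bearing subtree is strictly smaller in vertex count than its competitor, so that the rule deterministically selects the gadget, and simultaneously that after each rebuild the tree returns to a state from which the next trigger applies. This is precisely where the multi-chain layout and the freedom to top up from $I$ are essential, since the controlling chains shrink by one vertex per phase and a set of invariants relating the chain lengths (of the form $q>r+k$ and $p\approx q+r+k$, all $\Theta(\sqrt m)$) must be preserved throughout. Verifying these invariants stage-by-stage and phase-by-phase, rather than the final arithmetic, is the delicate part of the argument.
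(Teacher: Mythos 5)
Your proposal is correct and follows essentially the same route as the paper's own proof: a dense gadget of $\Theta(m)$ edges on $\Theta(\sqrt{m})$ vertices placed in a chain that is kept the vertex-wise smaller candidate subtree, triggering cross edges with LCA $s$ organized into $\Theta(\sqrt{m})$ phases of $\Theta(\sqrt{m})$ stages each, and padding from the isolated reservoir $I$ to control the size comparisons --- even the invariants $q>r+k$ and $p\approx q+r+k$ coincide with the paper's construction. The only element you omit is the paper's additional edge sets on the competitor chains ($e_y=e_z+k+1$, $e_x=e_z+e_y$), which are not needed for the theorem as stated but are included in the paper to defeat the variant of SDFS3 that compares edge counts rather than vertex counts.
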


\noindent \textbf{Remark:} 
The worst case example mentioned above (say $G_1$) would also work without $X,Y$ and $Z$. 
Consider a second example (say $G_2$), where we take size of $A=2*k+2$,$B=k+1$ and $C=k$ and 
the vertices of $C$ have $\Theta(m)$ edges amongst each other. The same sequence of edge insertions
would also force SDFS3 to process $\Theta(m^2)$ edges. However, $G_1$ also ensures the same worst case bound
for SDFS3 if it chooses the subtree with lesser edges instead of the subtree with lesser vertices, 
which is an obvious workaround of the example $G_2$. The number of edges $e_x,e_y$ and $e_z$ are chosen precisely 
to counter that argument.

\section{Time Plots for experiments}
\label{appn:time_plots}
In this section we present the corresponding time plots for experiments performed earlier which 
were measured in terms of number of edges processed.
The comparison of the existing incremental algorithms for random undirected graphs are shown in 
Figure~\ref{fig:incDFSvarNMT}.
The comparison of the existing and proposed algorithms for random undirected graphs, 
random directed graphs and random DAGs are shown in Figure~\ref{fig:incDFS-propT}, 
Figure~\ref{fig:incDFS-propDT} and Figure~\ref{fig:incDFS-propDAGT} respectively.

\renewcommand{\figW}{0.325\linewidth}
\begin{figure*}[!ht]
\centering
\includegraphics[width=\figW]{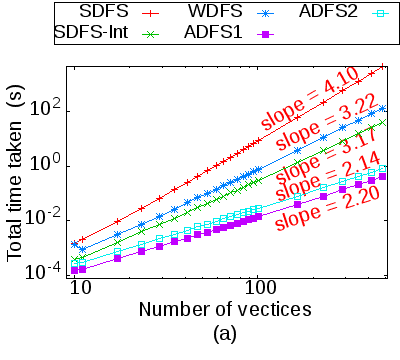}
	\includegraphics[width=\figW]{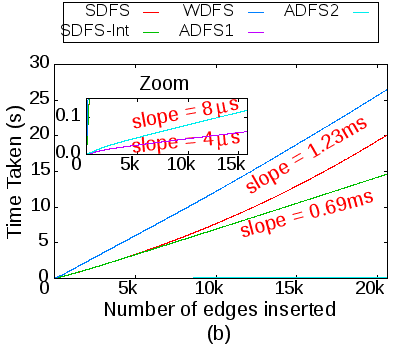}
	\includegraphics[width=\figW]{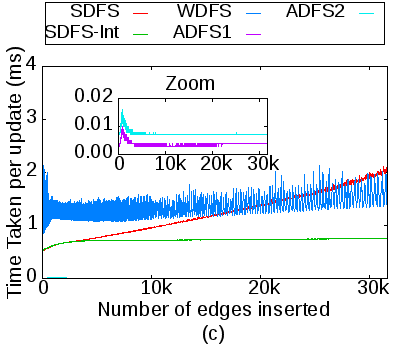}
\caption{
For various existing algorithms, the plot shows 
(a) Total time taken (logarithmic scale) for insertion 
of $m={n \choose 2}$ edges for different values of $n$ ,
(b) Total time taken for $n=1000$ and up to $n\sqrt{n}$ edge insertions, 
(c) Time taken per update for $n=1000$ and up to $n\sqrt{n}$ edge insertions. 
}
\label{fig:incDFSvarNMT}
\end{figure*}

%

\renewcommand{\figWN}{0.31\linewidth}
\begin{figure*}[!h]
	\centering
	\includegraphics[width=\figW]{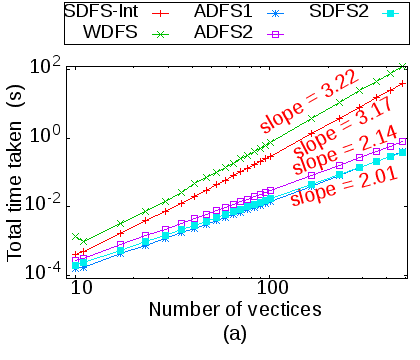}\qquad
	\includegraphics[width=\figWN]{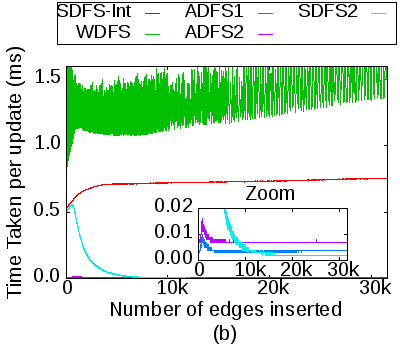}
	\caption{Comparison of existing and proposed algorithms on undirected graphs:
		(a) Total time taken for insertion 
		of $m={n \choose 2}$ edges for different values of $n$.
		(b)	Time taken per edge insertion 
		for $n=1000$ and up to $n\sqrt{n}$ edge insertions. 	
	}
	\label{fig:incDFS-propT}
\end{figure*}

\begin{figure*}[!h]
	\centering
	\includegraphics[width=\figW]{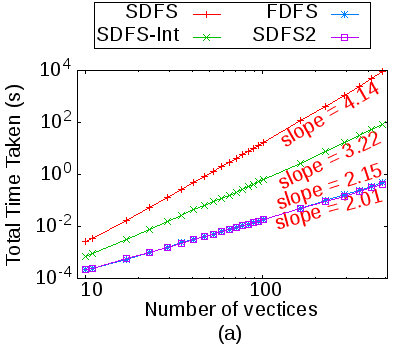}\qquad
	\includegraphics[width=\figW]{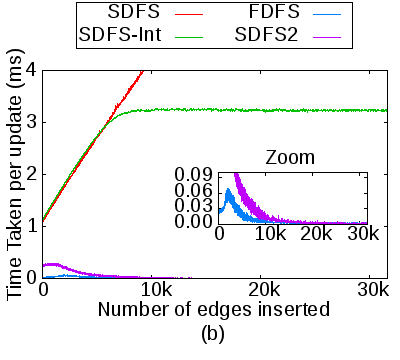}
	\caption{Comparison of existing and proposed algorithms on directed graphs:
		(a) Total time taken for insertion of $m={n \choose 2}$ edges 
		for different values of $n$ in logarithmic scale 
		(b)	Time taken  per edge insertion 
		for $n=1000$ and up to $n\sqrt{n}$ edge insertions. 	
	}
	\label{fig:incDFS-propDT}
\end{figure*}

\begin{figure*}[!h]
	\centering
	\includegraphics[width=\figW]{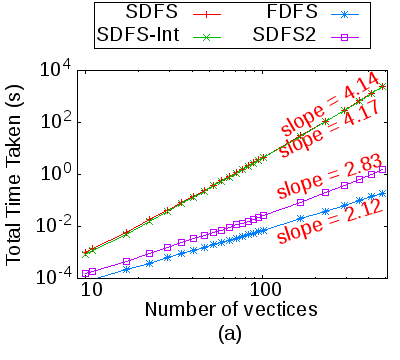}\qquad
	\includegraphics[width=\figW]{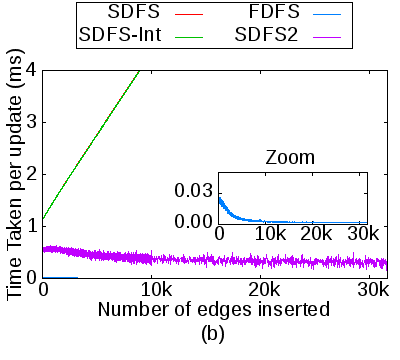}
	\caption{Comparison of existing and proposed algorithms on DAGs:
		(a) Total time taken for insertion of $m={n \choose 2}$ edges 
		for different values of $n$ in logarithmic scale 
		(b)	Time taken  per edge insertion 
		for $n=1000$ and up to $n\sqrt{n}$ edge insertions. 	
	}
	\label{fig:incDFS-propDAGT}
\end{figure*}

\newpage
\section{Exact performance comparison for real graphs}
\label{appn:real_exact}
The performance of different algorithms in terms of time and memory required on 
real undirected graphs and real directed graphs is shown in Table~\ref{tab:real_dataE} 
and Table~\ref{tab:real_data2E} respectively.

\setlength{\tabcolsep}{.3em}

\begin{table*}[!ht]
	
	\centering
	\def\arraystretch{1}
	\resizebox{\columnwidth}{!}{%
	\begin{tabular}{|l|r|r|r|r|r|r|r|r|r|r|r|r|r|r|r|}
		\hline
		Dataset & $n$ &$m|m^*$ & { \tiny $\frac{m}{n}|\frac{m}{m^*}$} & \multicolumn{2}{c|}{ADFS1}& \multicolumn{2}{c|}{ADFS2}& \multicolumn{2}{c|}{SDFS3}&
		\multicolumn{2}{c|}{SDFS2} & \multicolumn{2}{c|}{SDFS}& \multicolumn{2}{c|}{WDFS}\\
\hline
\textit{ass733} & 7.72K&21.47K&2.78&\textbf{0.08s}& {35.09M}&{0.15s}& {37.17M}&{2.73s}& {39.91M}&{51.16s}&\textbf{33.25M}&{1.51m}&\textbf{33.25M}&{3.99m}& {450.06M}\\
&&721.00&29.77&\textbf{0.07s}& {35.11M}&{0.19s}& {36.66M}&{2.69s}&\textbf{31.98M}&{2.49s}& {32.77M}&{3.79s}& {32.77M}&{6.68s}& {385.61M}\\
\textit{intTop} & 34.76K&107.72K&3.10&\textbf{0.50s}& {160.14M}&{1.07s}& {162.58M}&{55.66s}& {150.17M}&{31.50m}& {152.12M}&{1.13h}&\textbf{148.95M}&{2.04h}& {2.57G}\\
&&18.27K&5.89&\textbf{0.55s}& {160.80M}&{3.34s}& {169.44M}&{54.71s}& {150.84M}&{2.94m}& {151.81M}&{16.76m}&\textbf{146.00M}&{20.51m}& {2.56G}\\
\textit{fbFrnd} & 63.73K&817.03K&12.82&\textbf{10.26s}& {817.67M}&{22.39s}& {837.23M}&{25.06m}&\textbf{725.08M}&{5.75h}& {747.55M}&{41.80h}& {748.66M}&{33.48h}& {15.13G}\\
&&333.92K&2.45&\textbf{10.46s}& {816.72M}&{1.41m}& {844.47M}&{24.59m}&\textbf{724.12M}&{1.43h}& {745.77M}&{22.17h}& {746.42M}&{12.40h}& {15.14G}\\
\textit{wikiC} & 116.84K&2.03M&17.36&\textbf{15.96s}& {1.89G}&{29.01s}& {1.91G}&{1.11h}&\textbf{1.65G}&{13.71h}& {1.67G}&$>${100.00h}&-&$>${100.00h}&-\\
&&205.59K&9.86&\textbf{15.78s}& {1.89G}&{35.67s}& {1.91G}&{1.08h}&\textbf{1.65G}&{11.77h}& {1.67G}&{19.23h}& {1.67G}&{14.68h}& {38.50G}\\
\textit{arxvTh} & 22.91K&2.44M&106.72&\textbf{9.01s}& {2.10G}&{16.28s}& {2.11G}&{4.25m}& {1.81G}&{8.53h}&\textbf{1.81G}&$>${100.00h}&-&{24.33h}& {39.48G}\\
&&210.00&11.64K&\textbf{8.04s}& {2.61G}&{54.22s}& {2.77G}&{4.29m}& {2.34G}&{1.16m}&\textbf{2.33G}&{1.77m}& {2.33G}&{6.34h}& {3.94G}\\
\textit{arxvPh} & 28.09K&3.15M&112.07&\textbf{9.92s}& {2.69G}&{23.59s}& {2.71G}&{9.58m}& {2.33G}&{7.00h}&\textbf{2.32G}&$>${100.00h}&-&{31.19h}& {50.80G}\\
&&2.26K&1.39K&\textbf{8.15s}& {2.69G}&{1.12m}& {2.70G}&{9.61m}& {2.33G}&{14.02m}&\textbf{2.32G}&{26.11m}&\textbf{2.32G}&{7.18h}& {26.12G}\\
\textit{dblp} & 1.28M&3.32M&2.59&\textbf{16.31s}&\textbf{4.51G}&{26.12s}& {5.14G}&$>${100.00h}&-&$>${100.00h}&-&$>${100.00h}&-&$>${100.00h}&-\\
&&1.72M&1.93&\textbf{16.93s}&\textbf{4.51G}&{31.17s}& {4.76G}&$>${100.00h}&-&$>${100.00h}&-&$>${100.00h}&-&$>${100.00h}&-\\
\textit{youTb} & 3.22M&9.38M&2.91&\textbf{17.29m}&\textbf{13.29G}&{1.02h}& {14.04G}&$>${100.00h}&-&$>${100.00h}&-&$>${100.00h}&-&$>${100.00h}&-\\
&&203.00&46.18K&{23.44m}& {13.27G}&{42.08m}&\textbf{11.55G}&$>${100.00h}&-&{18.67m}& {12.28G}&\textbf{18.62m}& {12.28G}&{80.93h}& {165.80G}\\
\hline
	\end{tabular}
}
	\caption{Comparison of time taken by different algorithms in seconds(s)/minutes(m)/hours(h)
		and memory required in kilobytes(K)/megabytes(M)/gigabytes(G)
		for maintaining incremental DFS on real undirected graphs.
	} 
	\label{tab:real_dataE}
\end{table*}

\begin{table*}[!ht]
	\centering
	
	\def\arraystretch{1}
	\setlength{\tabcolsep}{.8em}
	\resizebox{\columnwidth}{!}{%
	\begin{tabular}{|l|r|r|r|r|r|r|r|r|r|r|r|}
		\hline
		Dataset & $n$ & $m|m^*$ & {\tiny $\frac{m}{n}|\frac{m}{m^*}$} & \multicolumn{2}{c|}{FDFS} & \multicolumn{2}{c|}{SDFS3} & \multicolumn{2}{c|}{SDFS2} & 
		\multicolumn{2}{c|}{SDFS} \\
		\hline
\textit{dncCoR} & 1.89K&5.52K&2.92&{0.34s}& {9.22M}&\textbf{0.22s}& {8.61M}&{0.50s}&\textbf{8.50M}&{2.17s}&\textbf{8.50M}\\
&&4.01K&1.38&{0.34s}& {9.22M}&\textbf{0.22s}& {8.62M}&{0.44s}& {8.48M}&{1.58s}&\textbf{8.47M}\\
\textit{ucIrv} & 1.90K&20.30K&10.69&{0.88s}& {17.47M}&\textbf{0.52s}& {15.30M}&{1.17s}& {22.94M}&{11.34s}&\textbf{15.17M}\\
&&20.12K&1.01&{0.87s}& {17.47M}&\textbf{0.49s}& {15.28M}&{1.15s}& {15.16M}&{10.85s}&\textbf{15.14M}\\
\textit{chess} & 7.30K&60.05K&8.22&{26.03s}& {44.58M}&\textbf{13.39s}& {38.42M}&{34.00s}&\textbf{38.20M}&{4.46m}& {45.98M}\\
&&100.00&600.46&{26.54s}& {52.48M}&{13.33s}&\textbf{43.75M}&\textbf{0.51s}& {43.94M}&\textbf{0.51s}& {44.64M}\\
\textit{diggNw} & 30.40K&85.25K&2.80&\textbf{2.23m}& {85.05M}&{2.98m}& {82.64M}&{8.03m}& {82.36M}&{32.33m}&\textbf{81.58M}\\
&&81.77K&1.04&\textbf{2.32m}& {85.95M}&{3.21m}&\textbf{77.41M}&{8.77m}& {80.56M}&{27.70m}& {77.92M}\\
\textit{asCaida} & 31.30K&97.84K&3.13&\textbf{35.21s}& {97.45M}&{2.53m}& {87.00M}&{7.98m}& {87.95M}&{37.98m}&\textbf{86.48M}\\
&&122.00&801.98&{35.19s}& {91.64M}&{1.99m}& {86.92M}&{2.82s}&\textbf{80.75M}&\textbf{2.80s}&\textbf{80.75M}\\
\textit{wikiEl} & 7.12K&103.62K&14.55&{16.21s}& {65.69M}&\textbf{16.02s}& {61.53M}&{41.27s}& {66.11M}&{13.83m}&\textbf{56.23M}\\
&&97.98K&1.06&{16.27s}& {69.36M}&\textbf{16.24s}& {63.88M}&{41.16s}& {59.02M}&{14.18m}&\textbf{58.25M}\\
\textit{slashDt} & 51.08K&130.37K&2.55&{14.30m}& {127.23M}&\textbf{13.85m}& {123.47M}&{38.50m}&\textbf{116.55M}&{1.35h}& {122.88M}\\
&&84.33K&1.55&{15.24m}& {126.61M}&\textbf{14.61m}& {122.08M}&{30.21m}&\textbf{116.94M}&{55.39m}& {122.12M}\\
\textit{lnKMsg} & 27.93K&237.13K&8.49&{9.52m}& {161.89M}&\textbf{5.22m}& {139.08M}&{12.51m}& {139.38M}&{2.02h}&\textbf{138.66M}\\
&&217.99K&1.09&{9.51m}& {161.91M}&\textbf{5.38m}& {138.72M}&{12.35m}& {139.58M}&{2.07h}&\textbf{138.66M}\\
\textit{fbWall} & 46.95K&264.00K&5.62&{27.11m}& {200.05M}&\textbf{21.08m}& {175.05M}&{52.52m}&\textbf{174.80M}&{5.21h}& {174.81M}\\
&&263.12K&1.00&{29.63m}& {200.03M}&\textbf{22.68m}& {175.03M}&{1.03h}&\textbf{174.77M}&{6.47h}& {174.80M}\\
\textit{enron} & 87.27K&320.15K&3.67&\textbf{10.32m}& {258.92M}&{16.00m}& {240.08M}&{58.40m}&\textbf{235.11M}&{11.63h}& {235.12M}\\
&&73.87K&4.33&\textbf{11.31m}& {258.94M}&{16.80m}& {240.08M}&{29.48m}&\textbf{234.94M}&{2.64h}&\textbf{234.94M}\\
\textit{gnutella} & 62.59K&501.75K&8.02&{29.11m}& {345.39M}&\textbf{23.64m}& {286.66M}&{1.00h}&\textbf{284.78M}&{7.54h}& {284.88M}\\
&&9.00&55.75K&{13.49m}& {288.19M}&{11.96m}&\textbf{245.27M}&{0.71s}&\textbf{245.27M}&\textbf{0.69s}& {245.28M}\\
\textit{epinion} & 131.83K&840.80K&6.38&{3.28h}& {556.69M}&\textbf{2.50h}& {487.06M}&{5.71h}&\textbf{478.86M}&{44.34h}& {479.22M}\\
&&939.00&895.42&{3.09h}& {640.75M}&{3.04h}& {580.38M}&\textbf{1.95m}& {570.61M}&{1.95m}&\textbf{570.59M}\\
\textit{digg} & 279.63K&1.73M&6.19&\textbf{3.42h}& {1.13G}&{4.02h}& {986.58M}&{13.53h}&\textbf{977.58M}&$>${100.00h}&-\\
&&1.64M&1.05&\textbf{3.23h}& {1.13G}&{4.32h}& {986.58M}&{13.20h}&\textbf{977.55M}&$>${100.00h}&-\\
\textit{perLoan} & 89.27K&3.33M&37.31&\textbf{9.39m}& {1.33G}&{1.11h}& {1.31G}&{4.80h}&\textbf{1.31G}&$>${100.00h}&-\\
&&1.26K&2.65K&{9.18m}& {1.33G}&{56.78m}& {1.31G}&\textbf{4.31m}&\textbf{1.31G}&{4.35m}& {1.31G}\\
\textit{flickr} & 2.30M&33.14M&14.39&$>${100.00h}&-&$>${100.00h}&-&$>${100.00h}&-&$>${100.00h}&-\\
&&134.00&247.31K&$>${100.00h}&-&$>${100.00h}&-&{12.69m}&\textbf{15.00G}&\textbf{12.59m}& {15.00G}\\
\textit{wikiHy} & 1.87M&39.95M&21.36&$>${100.00h}&-&$>${100.00h}&-&$>${100.00h}&-&$>${100.00h}&-\\
&&2.20K&18.18K&$>${100.00h}&-&$>${100.00h}&-&\textbf{1.44h}&\textbf{16.99G}&{1.63h}& {16.99G}\\
\hline
	\end{tabular}
	}
	\caption{Comparison of performance of different algorithms in terms of time in seconds(s)/minutes(m)/hours(h)
		and memory required in kilobytes(K)/megabytes(M)/gigabytes(G)
		for maintaining incremental DFS on real directed graphs.
	} 
	\label{tab:real_data2E}
\end{table*}


\end{document}